\newcommand{\refeq}[1]{(\ref{eq:#1})}
\newcommand{\ket}[1]{\left|\, #1 \, \right\rangle}
\newcommand{\ordo}{O} 
\newcommand{\ceil}[1]{\left\lceil #1 \right\rceil}
\newcommand{\norm}[1]{\left\lVert#1 \right\rVert}
\newcommand{\abs}[1]{\left\lvert#1 \right\rvert}
\newcommand{\vect}[1]{\bm{#1}}
\DeclareMathOperator{\mathpoly}{poly}
\newcommand{\poly}{{\mathpoly}}
\DeclareMathOperator{\mathdist}{dist}
\newcommand{\dist}{{\mathdist}}
\newtheorem{theorem}{Theorem}
\newtheorem{lemma}{Lemma}
\newtheorem{claim}{Claim}
\newtheorem{assumption}{Assumption}
\newcommand{\anonymize}[1]{#1}
\title{Extending Regev's factoring algorithm \\ to compute discrete logarithms}
\author[1,2]{\anonymize{\href{mailto:ekera@kth.se}{Martin Ekerå}}}
\author[1,2]{\anonymize{\href{mailto:jgartner@kth.se}{Joel Gärtner}}}
\affil[1]{\small \anonymize{KTH Royal Institute of Technology, Stockholm, Sweden}}
\affil[2]{\small \anonymize{Swedish NCSA, Swedish Armed Forces, Stockholm, Sweden}}
\begin{document}
\maketitle

\begin{abstract}
  Regev recently introduced a quantum factoring algorithm that may be perceived as a $d$-dimensional variation of Shor's factoring algorithm.
  In this work, we extend Regev's factoring algorithm to an algorithm for computing discrete logarithms in a natural way.
  Furthermore, we discuss natural extensions of Regev's factoring algorithm to order finding, and to factoring completely via order finding.
  For all of these algorithms, we discuss various practical implementation considerations, including in particular the robustness of the post-processing.
\end{abstract}

\section{Introduction}
\label{sect:introduction}
Regev~\cite{regev23} recently introduced a $d$-dimensional variation of Shor's algorithm~\cite{shor94, shor97} for factoring integers.
The quantum circuit for Regev's algorithm is asymptotically smaller than the circuit for Shor's original algorithm, but the reduction in circuit size comes at the expense of using more space, and of many runs of the circuit being required to achieve a successful factorization.

In this work, we show how Regev's algorithm can be extended to compute discrete logarithms in finite cyclic groups.
In particular, we focus on computing discrete logarithms in~$\mathbb Z_p^*$ for~$p$ a large prime, although the algorithm we present may be generalized to other cyclic groups --- and by extension to Abelian groups.
By comparison, since Regev factors composite~$N$, he implicitly works in~$\mathbb Z_N^*$.

For Regev's algorithm and our extension thereof to reach an advantage over Shor's algorithms~\cite{shor94, shor97}, and the various variations thereof that are in the literature~\cite{ekera-hastad, kaliski, seifert, ekera-pp, ekera-success, ekera-completely, ekera-short-success, ekera-revisiting, ekera-general}, there must exist a notion of \emph{small} elements in the group.

More specifically, there must exist a set of small group elements such that any composition of a subset of these elements under the group operation is also a comparatively small element, and such that small elements are much more efficient to compose than arbitrary elements.

A natural notion of small group elements exists for~$\mathbb Z_p^*$ and~$\mathbb Z_N^*$, for~$p$ a large prime and~$N$ a large composite, respectively.
As such, our extension of Regev's algorithm can compute discrete logarithms in~$\mathbb Z_p^*$ with an asymptotically smaller quantum circuit than the circuits for existing variations of Shor's algorithm.
To the best of our current knowledge, it is however not straightforward to extend the notion of small group elements to elliptic curve groups.

Besides our extension to computing discrete logarithms, we furthermore discuss natural extensions of Regev's algorithm to order finding, and to factoring completely via order finding, in App.~\ref{sect:order-finding-factoring}.

For both Regev's algorithm and our extensions, the asymptotic reduction in the circuit size comes at the expense of using more space, and of having to perform many runs of the circuit.
The outputs from the runs are then jointly post-processed classically.
A practical problem that arises in this context is that the quantum error correction may fail to correct all errors that arise during the course of a run, resulting in an erroneous output being produced.

For this reason, we analyze the robustness of the classical post-processing to errors.
In particular, we heuristically show that it succeeds in recovering the solution even if a significant fraction of the runs produce random outputs.

\section{Preliminaries}
Throughout this work, we write generic Abelian groups multiplicatively, and denote by~$1$ the identity in such groups.
Furthermore, we let~$\rho_s$ be a Gaussian function, defined for $s > 0$ as
\begin{align*}
  \rho_s(\vect{z}) = \exp \left(-\pi \, \frac{\norm{\vect{z}}^2}{s^2} \right)
\end{align*}
where~$\vect z$ is some vector.

\subsection{Recalling Regev's factoring algorithm}
\label{sect:recalling-regevs-factoring-algorithm}
To factor an $n$-bit composite~$N$, Regev~\cite{regev23} defines the lattice
\begin{align*}
  \mathcal L =
    \Bigg\{
      (z_1, \ldots, z_d) \in \mathbb Z^d
      \: \Bigg| \:
      \prod_{i \, = \, 1}^{d} a_i^{z_i} = 1 \:\: (\text{mod } N)
    \Bigg\}
\end{align*}
where $a_i = b_i^2$ for $i \in [1, d] \cap \mathbb Z$, and where $b_1, \ldots, b_d$ are some small $\ordo(\log n)$-bit integers.
In what follows, we assume $b_1, \ldots, b_d$ to be the first $d = \ceil{\sqrt{n}\,}$ primes for simplicity, although other choices are possible, see Sect.~\ref{sect:rationale-choice-generators}.

The quantum part of Regev's algorithm is run $m \ge d + 4$ times to sample a set of~$m$ vectors that can be proved to be close to cosets of~$\mathcal L^*/\mathbb Z^d$.
A classical post-processing algorithm then recovers vectors in~$\mathcal L$ from this set of vectors.

If one of the vectors thus recovered is in fact in~$\mathcal L \backslash \mathcal L^0$, where
\begin{align*}
  \mathcal L^0 =
    \Bigg\{
      (z_1, \ldots, z_d) \in \mathbb Z^d
      \: \Bigg| \:
      \prod_{i \, = \, 1}^{d} b_i^{z_i} \in \{ -1, 1 \} \:\: (\text{mod } N)
    \Bigg\},
\end{align*}
the vector gives rise to a non-trivial split of~$N$.
In practice, our simulations~\cite{eg23-simulations} show that the recovered vectors often give rise to a sufficient number of distinct non-trivial splits so as to yield the complete factorization of~$N$.

In this work, we extend Regev's factoring algorithm to computing discrete logarithms, to order finding, and to factoring completely via order-finding.

\subsubsection{Recalling and adapting Regev's classical post-processing}
To start off, let us recall a claim and a lemma near verbatim from~\cite{regev23} as they form the basis for the classical post-processing in both Regev's algorithm and in the extended algorithms that we introduce in this work:

\begin{claim}[Claim~5.1 from~\cite{regev23}]
  \label{cla:5.1}
  There is an efficient classical algorithm that given a basis of a lattice $\mathcal L \subset \mathbb R^k$, and some norm bound $T > 0$, outputs a list of $\ell \le k$ vectors $\vect{z}_1, \ldots, \vect{z}_{\ell} \in \mathcal L$ of norm at most $2^{k/2} \sqrt{k} \, T$ with the property that any vector in~$\mathcal L$ of norm at most~$T$ must be an integer combination of them.
  In other words, the sublattice they generate contains all the vectors in~$\mathcal L$ of norm at most~$T$.
\end{claim}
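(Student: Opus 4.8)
The plan is to use the LLL lattice basis reduction algorithm as the core engine, iterating a "reduce, then peel off the short vector" loop. First I would recall the key quantitative guarantee of LLL: given a basis of a lattice $\mathcal{L}$ of rank $k$, it runs in polynomial time and outputs a reduced basis $\vect{v}_1, \ldots, \vect{v}_k$ whose first vector satisfies $\norm{\vect{v}_1} \le 2^{(k-1)/2} \lambda_1(\mathcal{L})$, where $\lambda_1$ is the length of the shortest nonzero vector. So if $\mathcal{L}$ contains \emph{any} nonzero vector of norm at most $T$, then $\lambda_1(\mathcal{L}) \le T$ and LLL hands us a vector of norm at most $2^{(k-1)/2} T \le 2^{k/2}\sqrt{k}\,T$.

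The main idea is then to build up the desired list of vectors $\vect{z}_1, \ldots, \vect{z}_\ell$ one coordinate of "short-vector space" at a time. Run LLL; if its shortest output vector has norm exceeding $2^{k/2}\sqrt{k}\,T$, then by the bound above there is no nonzero vector of norm $\le T$ left, so we stop and output whatever we have so far — the property is vacuously maintained. Otherwise, take that short vector as $\vect{z}_1$, and recurse on a complementary sublattice: concretely, extend $\vect{z}_1$ to a basis of $\mathcal{L}$ (this is possible since $\vect{z}_1$ is part of an LLL basis, or one can work with the quotient lattice $\mathcal{L} / \mathbb{Z}\vect{z}_1$ projected onto the orthogonal complement of $\vect{z}_1$), obtaining a rank-$(k-1)$ lattice, and repeat. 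Each step reduces the rank by one, so after at most $k$ steps the process terminates, giving $\ell \le k$ vectors, each of norm at most $2^{k/2}\sqrt{k}\,T$.

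It remains to argue correctness, namely that every $\vect{w} \in \mathcal{L}$ with $\norm{\vect{w}} \le T$ is an integer combination of $\vect{z}_1, \ldots, \vect{z}_\ell$. I would prove this by induction on the rank. Given such a $\vect{w}$, write it in the basis $\{\vect{z}_1\} \cup (\text{lift of a basis of the quotient})$; its image in the quotient lattice still has norm at most $T$ (projection is non-expanding), so by the induction hypothesis applied to the recursive call it lies in the span of the later $\vect{z}_i$'s; subtracting off that combination leaves an integer multiple of $\vect{z}_1$, as desired. The one subtlety — and the step I expect to need the most care — is the bookkeeping in the recursion: projecting onto the orthogonal complement of $\vect{z}_1$ can change vector norms and produce a lattice that is not integral, so one must either work with the exact quotient construction and track how norms transform under lifting, or instead keep everything inside $\mathcal{L}$ by replacing $\mathcal{L}$ with the sublattice $\{\vect{x} \in \mathcal{L} : \det\text{-type condition}\}$; verifying that the norm bound $2^{k/2}\sqrt{k}\,T$ survives all $k$ iterations (and does not, say, compound to $2^{k^2}$) is the crux. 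The clean way is to note that at every stage we apply LLL afresh to a rank-$\le k$ lattice and invoke the $2^{(k-1)/2}\lambda_1$ bound a single time per extracted vector, so the exponent never accumulates beyond $k/2$.
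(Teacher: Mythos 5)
First, note that the paper does not actually prove this claim --- it is imported verbatim from Regev (Claim~5.1 of that work), with only the remark that the factor $2^{k/2}$ comes from LLL --- so your proposal must be measured against the standard argument. That argument uses a \emph{single} LLL call: let $\vect{b}_1, \ldots, \vect{b}_k$ be the reduced basis with Gram--Schmidt vectors $\tilde{\vect{b}}_1, \ldots, \tilde{\vect{b}}_k$, and output the prefix $\vect{b}_1, \ldots, \vect{b}_\ell$ where $\ell$ is the largest index with $\norm{\tilde{\vect{b}}_\ell} \le T$. Correctness is immediate: writing $\vect{v} = \sum_{i \le j} c_i \vect{b}_i$ with $c_j \neq 0$ gives $\norm{\vect{v}} \ge \abs{c_j} \, \norm{\tilde{\vect{b}}_j} \ge \norm{\tilde{\vect{b}}_j}$, so every vector of norm at most $T$ lies in the prefix; the bound $\norm{\vect{b}_i} \le 2^{k/2}\sqrt{k}\,T$ for $i \le \ell$ follows from size-reducedness ($\abs{\mu_{ij}} \le 1/2$) together with $\norm{\tilde{\vect{b}}_j} \le 2^{(\ell - j)/2} \norm{\tilde{\vect{b}}_\ell}$. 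Your recursive ``peel off and project'' scheme is a genuinely different route, and it can be made to work, but as written it has a gap precisely where you flag it.

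The gap is the lifting step. The recursive call operates on the projected lattice $\pi(\mathcal L)$, where $\pi$ is the projection orthogonal to $\vect{z}_1$, so the vectors it returns are \emph{not} elements of $\mathcal L$, whereas the claim requires output vectors in $\mathcal L$ of norm at most $2^{k/2}\sqrt{k}\,T$. Each returned vector must be lifted to a preimage in $\mathcal L$; an arbitrary preimage differs from any other by a multiple of $\vect{z}_1$ and can be arbitrarily long, so one must take the size-reduced lift, whose squared norm is at most the squared norm of the projection plus $\norm{\vect{z}_1}^2/4$. Since the vector extracted at recursion depth $j$ accumulates such a correction along each of $\vect{z}_1, \ldots, \vect{z}_{j-1}$ as it is lifted back up, the squared norms add across all $k$ levels. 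Your closing remark --- that LLL's $2^{(k-1)/2}\lambda_1$ bound is invoked only once per extracted vector, so the exponent cannot accumulate --- bounds each vector only inside its own quotient, not its lift in $\mathcal L$, and therefore does not close this. The accumulation does in fact stay within budget: with $B_r$ the norm bound achieved at rank $r$ and $\theta_r$ the stopping threshold, one gets $B_r^2 \le B_{r-1}^2 + \theta_r^2/4$, which unrolls to $B_k^2 \le T^2 + \tfrac{1}{4}\sum_{r \le k} \theta_r^2 \le 2^k k\, T^2$. But this computation is exactly the content of the claimed constant and must appear in the proof; with it included (and with the output explicitly defined as $\vect{z}_1$ together with the size-reduced lifts, which is also what makes your induction step literally parse), your argument becomes sound --- at the cost of $k$ LLL calls and visibly more bookkeeping than the one-shot proof.
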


The factor~$2^{k/2}$ in the above claim comes from the use of LLL~\cite{lll}.
It can be improved by the use of more powerful lattice reduction algorithms, such as BKZ~\cite{bkz}, but for simplicity we follow Regev and use the above version of the claim based on LLL throughout this work.

\begin{lemma}[Lemma~4.4 from~\cite{regev23}]
  \label{lem:4.4}
  Let~$\mathcal L \subset \mathbb Z^d$ and $m \ge d + 4$.
  Let $\vect{v}_1, \ldots, \vect{v}_m$ be uniformly chosen cosets from~$\mathcal L^* / \mathbb Z^d$.
  For some $\delta > 0$, let $\vect{w}_1, \ldots, \vect{w}_m \in [0,1)^d$ be such that $\dist_{\mathbb R^d/\mathbb Z^d}(\vect w_i, \vect v_i) < \delta$ for all~$i \in [1, d] \cap \mathbb Z$.
  For some scaling factor $S > 0$, define the $d + m$-dimensional lattice~$\mathcal L'$ generated by the rows of~$\vect{B}$ where
  \[\arraycolsep=7pt\def\arraystretch{2}
    \vect{B}
    =
    \left(
    \begin{array}{c|c c c}
      \vect{I}_{d\times d} & S\cdot \vect{w}_1^T & \cdots & S\cdot \vect{w}_m^T \\
      \hline
      \vect{0}_{m \times d}  &  & S \cdot \vect{I}_{m\times m} & \\
    \end{array}
    \right).
  \]

  Then, for any $\vect{u} \in \mathcal L$, there exists a vector $\vect{u}' \in \mathcal L'$ whose first~$d$ coordinates are equal to~$\vect{u}$, and whose norm is at most $\norm{\vect{u}} \cdot (1 + m \cdot S^2 \cdot \delta^2)^{1/2}$.
  Moreover, with probability at least $1/4$ (over the choice of the~$\vect{v}_i$), any non-zero $\vect u' \in \mathcal L'$ of norm $\norm{\vect u'} < \min(S, \delta^{-1}) \cdot \varepsilon/2$ satisfies that its first~$d$ coordinates are a non-zero vector in~$\mathcal L$, where $\varepsilon = (4\det \mathcal L)^{-1/m}/3$.
\end{lemma}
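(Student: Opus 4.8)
The plan is to prove the two assertions in turn. For the deterministic one, given $\vect{u}=(u_1,\dots,u_d)\in\mathcal{L}\subset\mathbb{Z}^d$ I would form the integer combination $\sum_{i=1}^{d}u_i\vect{r}_i$ of the first $d$ rows $\vect{r}_1,\dots,\vect{r}_d$ of $\vect{B}$: its first $d$ coordinates are exactly $\vect{u}$ and its $(d+j)$-th coordinate is $S\langle\vect{u},\vect{w}_j\rangle$. For each $j$ pick a lift $\tilde{\vect{v}}_j\in\mathcal{L}^*$ of the coset $\vect{v}_j$ with $\norm{\vect{w}_j-\tilde{\vect{v}}_j}<\delta$ (possible since $\mathbb{Z}^d\subseteq\mathcal{L}^*$); as $\vect{u}\in\mathcal{L}$ and $\tilde{\vect{v}}_j\in\mathcal{L}^*$ the pairing $\langle\vect{u},\tilde{\vect{v}}_j\rangle$ is an integer, so subtracting $\langle\vect{u},\tilde{\vect{v}}_j\rangle$ copies of the $(d+j)$-th row of $\vect{B}$ leaves the first $d$ coordinates fixed while replacing the $(d+j)$-th coordinate by $S\langle\vect{u},\vect{w}_j-\tilde{\vect{v}}_j\rangle$. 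Calling the result $\vect{u}'\in\mathcal{L}'$, one has $\norm{\vect{u}'}^2=\norm{\vect{u}}^2+S^2\sum_{j=1}^{m}\langle\vect{u},\vect{w}_j-\tilde{\vect{v}}_j\rangle^2$, and Cauchy--Schwarz with $\norm{\vect{w}_j-\tilde{\vect{v}}_j}<\delta$ bounds the sum by $m\,\delta^2\norm{\vect{u}}^2$, giving $\norm{\vect{u}'}\le\norm{\vect{u}}\,(1+m\,S^2\delta^2)^{1/2}$.

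For the probabilistic assertion, fix a nonzero $\vect{u}'\in\mathcal{L}'$ with $\norm{\vect{u}'}<R:=\min(S,\delta^{-1})\,\varepsilon/2$, let $\vect{x}\in\mathbb{Z}^d$ be its first $d$ coordinates, and write its $(d+j)$-th coordinate as $S(\langle\vect{x},\vect{w}_j\rangle+k_j)$ with $k_j\in\mathbb{Z}$. If $\vect{x}=\vect{0}$ then $\vect{u}'$ is a nonzero integer combination of the bottom $m$ rows and so has norm at least $S$, contradicting $\norm{\vect{u}'}<R\le S\varepsilon/2<S$; hence $\vect{x}\neq\vect{0}$, and it remains to show $\vect{x}\in\mathcal{L}$. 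From $\norm{\vect{u}'}<R\le S\varepsilon/2$ the $(d+j)$-th coordinate has absolute value below $S\varepsilon/2$, so $\dist(\langle\vect{x},\vect{w}_j\rangle,\mathbb{Z})<\varepsilon/2$; and from $\norm{\vect{x}}\le\norm{\vect{u}'}<R\le\delta^{-1}\varepsilon/2$ together with $\norm{\vect{w}_j-\tilde{\vect{v}}_j}<\delta$ we get $\abs{\langle\vect{x},\vect{w}_j-\tilde{\vect{v}}_j\rangle}<\varepsilon/2$. The triangle inequality then yields $\dist_{\mathbb{R}/\mathbb{Z}}(\langle\vect{x},\vect{v}_j\rangle,0)<\varepsilon$ for every $j$, where $\langle\vect{x},\vect{v}_j\rangle$ is well defined modulo $1$ and depends only on the class $\bar{\vect{x}}$ of $\vect{x}$ in $\mathbb{Z}^d/\mathcal{L}$. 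So it suffices to show that, with probability at least $1/4$ over the $\vect{v}_i$, no nonzero coset $\bar{\vect{x}}\in\mathbb{Z}^d/\mathcal{L}$ satisfies $\dist_{\mathbb{R}/\mathbb{Z}}(\langle\vect{x},\vect{v}_j\rangle,0)<\varepsilon$ for all $j$.

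I would bound the probability of the complementary (bad) event by a union bound over nonzero cosets. For a fixed $\bar{\vect{x}}$ of order $q$ in $\mathbb{Z}^d/\mathcal{L}$, the map $\vect{v}\mapsto\langle\vect{x},\vect{v}\rangle\bmod1$ is a homomorphism $\mathcal{L}^*/\mathbb{Z}^d\to\mathbb{R}/\mathbb{Z}$ with image $\tfrac1q\mathbb{Z}/\mathbb{Z}$ and equal fibres; it is nontrivial for nonzero $\bar{\vect{x}}$ by biduality of lattices, so $q\ge2$. Hence one uniform $\vect{v}_j$ lands within $\varepsilon$ of $0$ with probability $N_q/q$, where $N_q=\abs{\{t\in\tfrac1q\mathbb{Z}/\mathbb{Z}:\dist_{\mathbb{R}/\mathbb{Z}}(t,0)<\varepsilon\}}$, and by independence all $m$ do so with probability $(N_q/q)^m$. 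I would then split the union bound by $q$. If $q\le1/\varepsilon$ then $N_q=1$, and the number of cosets of order dividing $q$ is at most $q^d$ (the $q$-torsion of $\mathbb{Z}^d/\mathcal{L}$, which has at most $d$ invariant factors), so the contribution is at most $q^{d-m}\le q^{-4}$ because $m\ge d+4$, and $\sum_{q\ge2}q^{-4}<1/4$. If $q>1/\varepsilon$ then $N_q\le2\varepsilon q+1<3\varepsilon q$, so $N_q/q<3\varepsilon$; since there are fewer than $\det\mathcal{L}$ such cosets and $(3\varepsilon)^m=(4\det\mathcal{L})^{-1}$ by the choice of $\varepsilon$, their total contribution is below $\det\mathcal{L}\cdot(3\varepsilon)^m=1/4$. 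Adding the two pieces bounds the bad event by less than $3/4$, as required. The step I expect to be the main obstacle is precisely this last one: choosing the right case split on the order $q$, counting the small-order cosets through the torsion structure of $\mathbb{Z}^d/\mathcal{L}$ rather than through $\det\mathcal{L}$, and checking that the calibration $\varepsilon=(4\det\mathcal{L})^{-1/m}/3$ with $m\ge d+4$ makes both contributions fit under the budget.
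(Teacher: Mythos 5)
Your proof is correct, and the paper itself does not prove this lemma --- it is recalled verbatim from Regev's paper, whose original proof your argument reproduces essentially exactly: the same lift-and-subtract construction with Cauchy--Schwarz for the first claim, and for the second the same union bound over nonzero cosets of $\mathbb Z^d/\mathcal L$ split by their order $q$, with the $q^{d-m}\le q^{-4}$ bound for small $q$ and the $(3\varepsilon)^m \det\mathcal L = 1/4$ calibration for large $q$.
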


Regev's classical post-processing in~\cite{regev23} essentially follows by combining Lem\-ma~\ref{lem:4.4} and Claim~\ref{cla:5.1} above, as does our classical post-processing.
However, in~\cite{regev23}, Regev only describes the classical post-processing as a part of the main theorem, and not as a separate lemma.
Since we only need the post-processing part of the theorem, we extract this part and give the below lemma that is strongly influenced by Regev's main theorem:

\begin{lemma}[Lemma derived from Theorem~1.1~in~\cite{regev23}]
  \label{lem:post-process}
  Let $m = O(d)$ be an integer not smaller than $d + 4$ and let $T = \exp(\ordo(d))$.
  Furthermore, let~$\mathcal L$ be a $d$-dimensional lattice with $\det \mathcal L < 2^{d^2}$, and let $\vect{v}_1, \ldots, \vect{v}_m$ be uniformly chosen cosets from $\mathcal L^* / \mathbb Z^d$.

  Suppose that we are given a set of~$m$ vectors $\vect{w}_1, \ldots, \vect{w}_m \in [0,1)^d$ such that $\dist_{\mathbb R^d/\mathbb Z^d}(\vect w_i, \vect v_i) < \delta$ for  all $i \in [1, m] \cap \mathbb Z$, where $\delta = \exp(-\ordo(d))$ is sufficiently small.
  Then, there is an efficient classical algorithm that, with probability at least~$1/4$ (over the choice of the~$\vect v_i$), recovers a basis for a sublattice of~$\mathcal L$ that contains all vectors in~$\mathcal L$ of norm at most~$T$.
\end{lemma}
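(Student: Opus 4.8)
The plan is to combine Lemma~\ref{lem:4.4} with Claim~\ref{cla:5.1} exactly as Regev does inside the proof of his Theorem~1.1, but isolating only the post-processing guarantee. First I would form the $(d+m)$-dimensional lattice $\mathcal L'$ of Lemma~\ref{lem:4.4} from the given $\vect w_i$ with a scaling factor $S$ to be fixed below (working throughout with rational approximations of the $\vect w_i$ to $\ordo(d)$ bits of precision, which is all the hypothesis $\dist_{\mathbb R^d/\mathbb Z^d}(\vect w_i, \vect v_i) < \delta$ can meaningfully assume and which will be seen to suffice). Then I would run the algorithm of Claim~\ref{cla:5.1} on a basis of $\mathcal L'$ with a norm bound $T' > 0$, obtaining vectors $\vect z_1, \ldots, \vect z_\ell \in \mathcal L'$ with $\ell \le d + m$, each of norm at most $2^{(d+m)/2}\sqrt{d+m}\,T'$, whose integer span contains every vector of $\mathcal L'$ of norm at most $T'$. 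Letting $\pi \colon \mathbb R^{d+m} \to \mathbb R^d$ denote the projection onto the first $d$ coordinates, the output of the algorithm is a basis for the lattice generated by $\pi(\vect z_1), \ldots, \pi(\vect z_\ell)$, computed from these generators by a standard efficient lattice-basis (Hermite normal form) computation.

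For correctness I would take $S = \delta^{-1}$ and $T' = (1+m)^{1/2}\,T$. With $S = \delta^{-1}$ we have $\min(S, \delta^{-1}) = \delta^{-1}$, and the blow-up factor $(1 + m S^2 \delta^2)^{1/2}$ in Lemma~\ref{lem:4.4} collapses to $(1+m)^{1/2} = \ordo(\sqrt d)$. Then: (i) every $\vect u \in \mathcal L$ with $\norm{\vect u} \le T$ lifts, by the first part of Lemma~\ref{lem:4.4}, to some $\vect u' \in \mathcal L'$ with $\pi(\vect u') = \vect u$ and $\norm{\vect u'} \le (1+m)^{1/2}\,T = T'$, so $\vect u'$ — and hence $\vect u = \pi(\vect u')$ — lies in the span of the output vectors; and (ii) each $\vect z_j$ has norm at most $2^{(d+m)/2}\sqrt{d+m}\,T'$, so provided
\[
  2^{(d+m)/2}\sqrt{d+m}\,(1+m)^{1/2}\,T \;<\; \delta^{-1}\varepsilon/2, \qquad \varepsilon = (4 \det \mathcal L)^{-1/m}/3,
\]
the "moreover" part of Lemma~\ref{lem:4.4} forces $\pi(\vect z_j) \in \mathcal L$ for every $j$, with probability at least $1/4$ over the $\vect v_i$. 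Taken together, (i) and (ii) say the computed lattice is a sublattice of $\mathcal L$ that contains all vectors of $\mathcal L$ of norm at most $T$, which is the claim.

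It remains to check that the displayed inequality follows from the hypotheses, i.e. to quantify "$\delta$ sufficiently small" as $\delta = \exp(-\ordo(d))$. Using $m = \ordo(d)$ together with $\det \mathcal L < 2^{d^2}$ gives $(\det \mathcal L)^{1/m} < 2^{d^2/m} = \exp(\ordo(d))$, hence $\varepsilon^{-1} = \exp(\ordo(d))$; moreover $k = d+m = \ordo(d)$ gives $2^{k/2} = \exp(\ordo(d))$, while $\sqrt{d+m}$ and $(1+m)^{1/2}$ are $\poly(d)$, and $T = \exp(\ordo(d))$ by hypothesis. So the left-hand side above is $\exp(\ordo(d))$ and is exceeded by $\delta^{-1}\varepsilon/2$ as soon as $\delta^{-1} = \exp(\ordo(d))$ is large enough, i.e. as soon as $\delta = \exp(-\ordo(d))$ is small enough. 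Finally, all steps — forming $\mathcal L'$ (whose entries have magnitude $\exp(\ordo(d))$, hence $\poly(d)$ bits), the LLL call inside Claim~\ref{cla:5.1}, the projection, and the basis computation — run in time $\poly(d)$ and in the input bit size. The one point that genuinely needs care, and which I expect to be the crux, is this simultaneous balancing of $S$, $T'$ and $\delta$ against the exponential loss factors $2^{k/2}$ and $\varepsilon^{-1}$; once the choice $S = \delta^{-1}$, $T' = (1+m)^{1/2}T$ is made, the rest is bookkeeping.
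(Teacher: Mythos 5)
Your proposal is correct and follows essentially the same route as the paper's own proof: construct $\mathcal L'$ with $S = \delta^{-1}$, apply Claim~\ref{cla:5.1} with norm bound $(m+1)^{1/2}T$, and check that the LLL output bound $2^{(d+m)/2}\sqrt{d+m}\,(m+1)^{1/2}\,T$ falls below the threshold $\delta^{-1}(4\det\mathcal L)^{-1/m}/6$ of the second part of Lemma~\ref{lem:4.4}, which holds for sufficiently small $\delta = \exp(-\ordo(d))$ since $\det\mathcal L < 2^{d^2}$ and $m \ge d+4$. The only differences are harmless elaborations (finite-precision representation of the $\vect w_i$, the explicit Hermite-normal-form step) that the paper leaves implicit.
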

\begin{proof}
  We begin by constructing the lattice~$\mathcal L'$ in Lemma~\ref{lem:4.4} with the given set of vectors $\vect w_1, \ldots, \vect w_m$ and with $S = \delta^{-1}$.
  We then have that any vector $\vect{u} \in \mathcal L$ of norm at most~$T$ corresponds to a vector~$\vect{u}' \in \mathcal L'$ of norm $\norm{\vect{u}'} \le (m+1)^{1/2} T$.

  By using Claim~\ref{cla:5.1} with norm bound $(m+1)^{1/2} T$, we obtain a list of vectors $\vect z'_1, \ldots, \vect z'_\ell$ in $\mathcal L'$ of norm at most
  \[
    2^{(d+m)/2} \cdot (d+m)^{1/2} \cdot (m+1)^{1/2}T < \delta^{-1} (4 \det\mathcal L)^{-1/m}/6
  \]
  where the inequality holds for sufficiently small~$\delta = \exp(-\ordo(d))$ as $\det \mathcal L < 2^{d^2}$.

  From the second property of Lemma~\ref{lem:4.4}, we have that with probability at least~$1/4$, any vector $\vect u' \in \mathcal L'$ of norm $\norm{\vect u'} < \delta^{-1}(4\det \mathcal L)^{-1/m}/6$ satisfies that its first~$d$ coordinates are a vector $\vect u \in \mathcal L$.
  As such, with probability at least $1/4$, the first~$d$ coordinates of~$\vect{z}'_i$ correspond to a vector $\vect{z}_i \in \mathcal L$ for all $i \in [1, \ell] \cap \mathbb Z$.

  Furthermore, any $\vect u \in \mathcal L$ of norm $\norm{\vect u} \le T$ can be written as a linear combination of $\vect z'_1, \ldots, \vect z'_\ell$.
  Thus, $\vect z_1, \ldots, \vect z_\ell$ generate a sublattice of~$\mathcal L$ that contains all vectors $\vect u \in \mathcal L$ of norm $\norm{\vect u} \le T$, and with probability at least $1/4$ the $\ell$ produced vectors can thus be used to recover a basis for this sublattice of $\mathcal L$.
\end{proof}

\subsubsection{Recalling Regev's quantum algorithm}
Regev's algorithm~\cite{regev23} induces an approximation to the state proportional~to
\begin{align*}
  \sum_{\vect{z} \in \{-D/2, \, \ldots, \, D/2-1\}^d}
  \rho_R(\vect{z})
  \ket{z_1, \, \ldots, \, z_{d},
  \prod_{i \, = \, 1}^{d}
  a_i^{z_{i} + D/2} \text{ mod } N}
\end{align*}
where $\vect{z} = (z_1, \ldots, z_d)$.
As for~$R$ and~$D$, Regev first picks $R = 2^{C \sqrt{n}}$ for some suitable constant $C > 0$, and then lets $D = 2^{\lceil \log_2(2 \sqrt{d} \, R) \rceil}$.

Quantum Fourier transforms (QFTs) of size~$D$ are then applied to the first~$d$ control registers, after which the resulting~$d$ frequencies $w_1, \ldots, w_d$ are read out and used to form the vector $\vect{w} = (w_1, \ldots, w_d) / D \in [0, 1)^d$.

The choice of~$C$ impacts the cost of the quantum circuit.
Simulations~\cite{eg23-simulations} show that selecting $C \approx 2$ is sufficient for $n = 2048$ bit integers when $d = \ceil{\sqrt{n}}$.
The probability of successfully factoring the integer in $m = d + 4$ runs is then close to one.
Ragavan and Vaikuntanathan~\cite[App.~A]{rv23} show, based on a heuristic assumption, that selecting $C = 3 + \epsilon + o(1)$ is sufficient asymptotically.

The idea in Regev's factoring algorithm is to leverage that $a_1, \ldots, a_d$ are small by using binary tree-based arithmetic in combination with square--and--multiply-based exponentiation to compute the product $\prod_{i \, = \, 1}^d a_i^{z_i} \text{ mod } N$.\footnote{The constant offset by~$D/2$ is easy to account for by e.g.~instead initializing the work register to a constant.
In practice this offset does not matter, and therefore we do not account for it here.}

More specifically, a work register is first initialized to $1$.
For $j = {\ell-1}, \ldots, 0$ the work register is then squared mod~$N$, the product $\prod_{i \, = \, 1}^d a_i^{z_{i, j}} \text{ mod } N$ computed using binary tree-based arithmetic, and the result multiplied into the work register mod~$N$.
Here, $z_i = \sum_{j \, = \, 0}^{\ell - 1} 2^j z_{i, j}$ where $z_{i, j} \in \{ 0, 1 \}$ and $\ell = \log_2 D$.

An issue with the above approach, as identified by Ekerå and Gidney\footnote{See \href{https://scottaaronson.blog/?p=7489\#comment-1953303}{Craig Gidney's comment on Scott Aaronson's blog}.}, is that the work register cannot be squared in place mod~$N$, leading to intermediary register values having to be kept around until they can be uncomputed at the end of the computation of the product.
This results in an increased space usage.

To circumvent this reversibility issue in Regev's original proposal, Ragavan and Vaikuntanathan~\cite{rv23} have proposed to use Fibonacci-based exponentiation in place of Regev's square--and--multiply-based exponentiation.
This optimization reduces the space requirements --- but it comes at the expense of increasing the circuit size and depth by constant factors.

Even if one accounts for the above optimization, it is currently not clear whether Regev's variation of Shor's algorithm~\cite{shor94, shor97} is more efficient in practice than the various other variations that are in the literature~\cite{ekera-hastad, kaliski, seifert, ekera-pp, ekera-general, ekera-success, ekera-short-success, ekera-completely} in actual physical implementations, and for concrete problem instances of limited size --- depending on what assumptions one makes on the quantum computer, which cost metrics one consider, which problem instances one consider, and so forth.

The same holds true for the extended algorithms that we introduce in this work.
For further discussion on this topic, see Sect.~\ref{sect:practical-efficiency}.

\subsubsection{Extending Regev's quantum algorithm}
\label{sect:extending-quantum}
To extend Regev's algorithm to compute discrete logarithms and orders, we need to introduce at least one element that is not small into the state.

Throughout this work, we therefore consider a slightly extended quantum algo\-rithm that induces an approximation to the state proportional to
\begin{align*}
  \sum_{\vect{z}\in \{-D/2, \ldots, D/2 - 1\}^d}
  \rho_R(\vect{z})
  \ket{
    z_1, \, \ldots, \, z_{d}, \,
    \prod_{i \, = \, 1}^{d-k} g_i^{z_i + D/2}
    \cdot
    \prod_{i \, = \, 1}^{k} u_i^{z_{d-k+i + D/2}}
  }
\end{align*}
where $g_1, \, \ldots, \, g_{d-k}$ are $d-k$ small group elements in some finite Abelian group~$\mathbb G$, and $u_1, \, \ldots, \, u_k$ are~$k$ arbitrary elements in~$\mathbb G$, for~$k$ a small constant.

We focus primarily on the group~$\mathbb Z_N^*$, for~$N$ a prime or composite, although the extended algorithm is generic in that it can also work in other groups.

For $\mathbb G = \mathbb Z_N^*$, we let~$n$ be the bit length of~$N$, where~$n$ serves as an upper bound on both the bit length of the order of~$\mathbb Z_N^*$ and on the bit length of elements in~$\mathbb Z_N^*$.
For other groups, it may be necessary or advantageous to make a distinction between these two bounds.
For the cryptographically relevant cases in~$\mathbb Z_N^*$ this is however not the case, see Sect.~\ref{sect:dlog-pre-compute} for further details.

Furthermore, for simplicity, we let $d = \ceil{\sqrt{n}\,}$, and we let $g_1, \ldots, g_{d-k}$ be the first $d-k$ primes that when perceived as elements of~$\mathbb Z_N^*$ are distinct from $u_1, \ldots, u_k$.
We require~$N$ to be coprime to the first $d$~primes for this reason.

Note however that other choices of~$d$ and $g_1, \ldots, g_{d-k}$ are possible:
In particular, it is possible to let $g_1, \ldots, g_{d-k}$ be any choice of small distinct primes coprime to~$N$, see Sect.~\ref{sect:rationale-choice-generators} for further details.

The idea in our extended algorithms is to use special arithmetic for the first product over $d-k$ small group elements --- i.e.~arithmetic that leverages that these elements are small, in analogy with Regev's original algorithm --- and to use standard arithmetic for the product over the remaining~$k$ elements.

Let us now summarize our analysis in the following lemma:
\begin{lemma}[Extended quantum algorithm based on~\cite{regev23, rv23}]
  \label{lem:quantum-algorithm-Zn}
  Let~$k \ge 0$ be an integer constant, let~$N > 0$ be an $n$-bit integer that is coprime to the first~$d = \ceil{\sqrt{n}\,}$ primes where $d > k$, and let $C > 0$ be a constant.

  Furthermore, let $u_1, \ldots, u_k \in \mathbb Z_N^*$, and let $g_1, \ldots, g_{d-k} \in \mathbb Z_N^*$ be the first $d-k$ primes that when perceived as elements of~$\mathbb Z_N^*$ are distinct from $u_1, \ldots, u_k$.

  Finally, as in~\cite[Theorem~1]{rv23}, let~$G$ be the gate cost of a quantum circuit that takes
  $
  \ket{a, b, t, 0^S} \rightarrow \ket{a, b, (t + ab) \text{ mod } N, 0^S}
  $
  for $a, b, t \in [0, N) \cap \mathbb Z$ and~$S$ the number of ancilla qubits required by the circuit, and let
  \begin{align*}
    \mathcal L
    =
    \Bigg\{
      (z_1, \ldots, z_d) \in \mathbb Z^d
      \: \Bigg| \:
      \prod_{i \, = \, 1}^{d-k} g_i^{z_i}
      \cdot
      \prod_{i \, = \, 1}^{k} u_i^{z_{d-k+i}}
      =
      1
    \Bigg\}.
  \end{align*}

  Then, there is a quantum algorithm that outputs a vector $\vect{w} \in [0, 1)^d$ that, except for with probability $1 / \poly(d)$, is within distance $\sqrt{d/2} \cdot 2^{-C\sqrt{n}}$ of a uniformly chosen coset $\vect{v} \in \mathcal L^*/\mathbb Z^d$.
  The quantum circuit for this algorithm has gate cost $\ordo(n^{1/2} \, G + n^{3/2})$ and it requires
  \begin{align*}
    Q = S + \left( \frac{C}{\log \phi} + 8 + o(1) \right) n
  \end{align*}
  qubits of space, for~$\phi$ the golden ratio.
\end{lemma}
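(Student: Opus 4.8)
The plan is to follow the structure of Regev's analysis in~\cite{regev23}, modified to accommodate the $k$ ``non-small'' elements $u_1, \ldots, u_k$ and the Fibonacci-based arithmetic of Ragavan--Vaikuntanathan~\cite{rv23}. First I would describe the quantum circuit explicitly: prepare the $d$ control registers in a Gaussian-weighted superposition over $\{-D/2, \ldots, D/2-1\}^d$ with parameter $R = 2^{C\sqrt{n}}$ and $D = 2^{\lceil \log_2(2\sqrt{d}\,R)\rceil}$ (this Gaussian state preparation is standard and contributes $\ordo(\poly(n))$ gates), then compute into a work register the group element $\prod_{i=1}^{d-k} g_i^{z_i + D/2} \cdot \prod_{i=1}^{k} u_i^{z_{d-k+i} + D/2}$ in $\mathbb{Z}_N^*$, and finally apply QFTs of size $D$ to the $d$ control registers and measure to obtain $\vect{w} = (w_1, \ldots, w_d)/D$. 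The key point is that the first product over small primes $g_i$ is evaluated with the tree-based / Fibonacci exponentiation of~\cite{rv23}, while the $k = \ordo(1)$ remaining factors $u_i^{z_{d-k+i}}$ are computed by ordinary square-and-multiply using the multiplier circuit of gate cost $G$. Since $k$ is constant and each exponent has $\ordo(\log D) = \ordo(\sqrt{n})$ bits, this adds only $\ordo(\sqrt{n}\, G)$ gates and $\ordo(n)$ ancilla qubits beyond Regev's circuit.

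Next I would do the gate-count and space bookkeeping. For the gate cost: the Gaussian preparation and the $d$ QFTs of size $D = 2^{\ordo(\sqrt{n})}$ cost $\ordo(d \log^2 D) = \ordo(n^{3/2})$ gates (or $\ordo(\sqrt{n} \cdot \poly(\log n))$ with a coarser QFT, but $\ordo(n^{3/2})$ is a safe bound); the small-prime exponentiation via~\cite{rv23} costs $\ordo(\sqrt{n}\,G + n^{3/2})$ as shown there; and the $k$ extra modular exponentiations cost $\ordo(\sqrt{n}\,G)$. Summing gives $\ordo(n^{1/2} G + n^{3/2})$. For the qubit count, I would invoke~\cite[Theorem~1]{rv23}: their circuit uses $S + (C/\log\phi + 8 + o(1))\,n$ qubits, where the $S$ term is the ancilla count of the multiplier, the $C/\log\phi$ term comes from storing the Fibonacci-exponentiation intermediate values at precision governed by $R = 2^{C\sqrt{n}}$, and the constant $8$ absorbs the control registers and a few work registers of size $\ordo(n)$. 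Because we only add $k = \ordo(1)$ further registers each of size $n$, and reuse the same multiplier and its ancillas, the extra space is absorbed into the $o(1)$ term, so the bound $Q = S + (C/\log\phi + 8 + o(1))\,n$ carries over verbatim.

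The analytic heart of the lemma is the output-distribution claim, and here I would reuse Regev's argument essentially unchanged. After the exponentiation, measuring the work register collapses the control registers onto a coset of the lattice $\mathcal{L}$ (defined in the statement), leaving a Gaussian-like superposition over that coset; applying the QFT and measuring then yields, by the standard Poisson-summation / Gaussian-Fourier-duality computation in~\cite[Sect.~3--4]{regev23}, a sample $\vect{w}$ that is within distance $\sqrt{d/2}\cdot 2^{-C\sqrt{n}}$ of a point in $\mathcal{L}^*/\mathbb{Z}^d$, with the failure probability $1/\poly(d)$ coming from the tails of the Gaussian truncated to $\{-D/2,\ldots,D/2-1\}^d$ and from rounding $D$ to a power of two. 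The only thing to check is that replacing some of the relation-generators by arbitrary elements $u_i$ does not break this: it does not, because Regev's argument never uses that the $a_i$ are small or squares --- smallness is used \emph{only} to bound the circuit cost, and the distributional analysis depends solely on the homomorphism $\mathbb{Z}^d \to \mathbb{G}$, $\vect{z} \mapsto \prod g_i^{z_i}\prod u_i^{z_{d-k+i}}$, whose kernel is exactly $\mathcal{L}$. I would also note that the uniformity of the coset $\vect{v} \in \mathcal{L}^*/\mathbb{Z}^d$ follows as in Regev from the fact that the measured work-register value is uniform over the image subgroup. I expect the main obstacle to be purely expository: stating the Fibonacci-exponentiation circuit of~\cite{rv23} precisely enough to justify the exact constant $C/\log\phi + 8 + o(1)$ while keeping the argument short, since that constant is where the whole $d$-dimensional speedup is cashed in; everything else is a routine adaptation of~\cite{regev23}.
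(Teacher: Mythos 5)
Your proposal follows essentially the same route as the paper's proof: prepare the Gaussian superposition, compute the product with the Ragavan--Vaikuntanathan circuit for the $d-k$ small primes and a separate standard circuit for the $k$ arbitrary elements, apply the QFTs, and observe that Regev's distributional analysis depends only on the homomorphism $\mathbb{Z}^d \to \mathbb{Z}_N^*$ with kernel $\mathcal{L}$, not on the generators being small. The gate-count bookkeeping and the observation that smallness is used only for circuit cost are exactly the paper's argument.

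The one step you under-justify is the claim that the $k$ extra exponentiations add ``only $\ordo(n)$ ancilla qubits.'' The multiplier primitive of gate cost $G$ is an \emph{out-of-place} multiply-add, $\ket{a,b,t,0^S} \rightarrow \ket{a,b,(t+ab)\bmod N,0^S}$, so a naive controlled product over the $\ordo(\log D) = \ordo(\sqrt{n})$ bits of each exponent $z_{d-k+i}$ would leave behind one $n$-qubit garbage register per multiplication, i.e.\ $\ordo(n^{3/2})$ qubits --- which is precisely the reversibility pitfall (noted by Ekerå and Gidney for Regev's original arithmetic) that motivates the entire space analysis. The paper closes this hole explicitly: it classically pre-computes both $u_i^{2^j}$ and $u_i^{-2^j}$, and after computing $pu$ into a fresh register it uncomputes the old register via $p + pu\cdot(-u^{-1}) = 0$, so that only $3n$ qubits of work space (plus the $S$ multiplier ancillas, reused from the first product) are ever live. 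This is a standard Shor-style trick and easily filled in, but as written your ``ordinary square-and-multiply'' does not by itself deliver the stated space bound. A second, more minor bookkeeping point: the constant $C$ in~\cite{rv23} is normalized differently from the $C$ here ($R = 2^{(C+2+o(1))d}$ versus $R = 2^{Cd}$), so quoting their qubit count verbatim requires a translation that the paper handles in a footnote.
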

\begin{proof}
  In analogy with Regev~\cite{regev23}, the quantum algorithm first induces an approximation to the state proportional to
  \begin{align}
    \sum_{\vect{z} \in \{-D/2, \ldots, D/2-1\}^d}
    \rho_R(\vect{z})
    &\ket{
      z_1, \, \ldots, \, z_{d}, \,
      \prod_{i \, = \, 1}^{d-k} g_i^{z_i + D/2}
      \cdot
      \prod_{i \, = \, 1}^{k} u_i^{z_{d-k+i} + D/2}
    }
    = \notag \\
    \sum_{\vect{z} \in \{0, \ldots, D-1\}^d}
    \rho_R(\vect{z} - \vect{c})
    &\ket{
      z_1 - D/2, \, \ldots, \, z_{d} - D/2, \,
      \prod_{i \, = \, 1}^{d-k} g_i^{z_i}
      \cdot
      \prod_{i \, = \, 1}^{k} u_i^{z_{d-k+i}}
    } \label{eq:psi-extended}
  \end{align}
  where $\vect{z} = (z_1, \ldots, z_d)$ and $\vect{c} = (D/2, \ldots, D/2)$.

  As for the parameters~$R$ and~$D$, in analogy with Regev~\cite{regev23}, we pick ${R = 2^{C \sqrt{n}}}$ for some suitable constant~$C > 0$, and let $D = 2^{\lceil \log_2(2 \sqrt{d} \, R) \rceil}$.
  QFTs of size~$D$ are then applied to the first~$d$ control registers $z_1, \ldots, z_d$ and the resulting frequencies $w_1, \ldots, w_d$ read out, yielding $\vect{w} = (w_1, \ldots, w_d) / D \in [0, 1)^d$.
  Based on the analysis in~\cite{regev23}, except for with probability $1 / \poly(d)$, the vector~$\vect{w}$ is then within distance $\sqrt{d}/(\sqrt{2}R) = \sqrt{d/2} \cdot 2^{-C\sqrt{n}}$ of a uniformly chosen coset $\vect{v} \in \mathcal L^*/\mathbb Z^d$.

  To simplify the notation in what follows, we shift the first~$d$ control registers by~$D/2$ since this does not affect the results measured after the QFTs have been applied.
  That is to say, instead of inducing an approximation to the state proportional to~$\refeq{psi-extended}$, we induce an approximation to the state proportional to
  \begin{align*}
    \sum_{\vect{z} \in \{0, \ldots, D-1\}^d}
    \rho_R(\vect{z} - \vect{c})
    \ket{
      z_1, \, \ldots, \, z_{d}, \,
      \prod_{i \, = \, 1}^{d-k} g_i^{z_i}
      \cdot
      \prod_{i \, = \, 1}^{k} u_i^{z_{d-k+i}}
    }
  \end{align*}
  in a series of computational steps.

  The first step is to approximate the state proportional to
  \begin{align*}
      \sum_{\vect{z}\in \{0, \ldots, D-1\}^d} \rho_R(\vect z-\vect{c}) \ket{z_1, \ldots, z_d}
  \end{align*}
  to within $1 / \poly(d)$ via a quantum circuit of size $d(\log D + \poly(\log d))$ as explained in~\cite{regev23, rv23} by referencing~\cite{regev09}.
  Note that $\ordo(d(\log D + \poly(\log d))) = \ordo(n)$.

  To compute the first product, i.e.\ to take
  \begin{align*}
    &\sum_{\vect{z}\in \{0, \ldots, D-1\}^d} \rho_R(\vect z-\vect{c}) \ket{z_1, \ldots, z_d, 0^A}
    \\
    \rightarrow
    &\sum_{\vect{z}\in \{0, \ldots, D-1\}^d} \rho_R(\vect z-\vect{c}) \ket{z_1, \ldots, z_d, \prod_{i \, = \, 1}^{d-k} g_i^{z_i}, 0^{A-n}}
  \end{align*}
  a special quantum circuit from~\cite{rv23} is used, that for efficiency reasons leverages that the group elements $g_1, \ldots, g_{d-k}$ are small.

  By~\cite[Lemma~2.2]{rv23}, this circuit uses $\ordo(n^{1/2} \, G + n^{3/2})$ gates and $Q$~qubits of space\footnote{Note that the constant~$C$ in~\cite{rv23} is different from our constant $C$: Whereas we have defined~$C$ so that $R = 2^{Cd}$ as in~\cite{regev23}, the authors of~\cite{rv23} have instead defined~$C$ so that $R = 2^{(C+2+o(1))d}$.
  This explains the difference between our expression for~$Q$ and the corresponding expression in~\cite{rv23}.} in total when accounting both for the~$d$ control registers and the~$A$ initial ancilla qubits, $n$~qubits of which are used to store the product.

  Note that the circuit in~\cite[Lemma~2.2]{rv23} is for the product of the squares of the first~$d$ primes raised to short exponents.
  We compute the product of a subset of $d-k$ of the first $d$ primes raised to short exponents, but these differences only serve to reduce the cost of the circuit.

  To compute the second product, i.e.\ to take
  \begin{align*}
    &\sum_{\vect{z} \in \{0, \ldots, D-1\}^d} \rho_R(\vect z-\vect{c}) \ket{z_1, \ldots, z_d, \prod_{i \, = \, 1}^{d-k} g_i^{z_i}, 0^{A-n}} \\
    \rightarrow
    &\sum_{\vect{z} \in \{0, \ldots, D-1\}^d} \rho_R(\vect z-\vect{c}) \ket{z_1, \ldots, z_d, \prod_{i \, = \, 1}^{d-k} g_i^{z_i} \cdot \prod_{i \, = \, 1}^{k} u_i^{z_{d-k+i}}, 0^{A-n}}
  \end{align*}
  a standard quantum circuit is used since~$u_1, \ldots, u_k$ are not necessarily small.

  More specifically, $2 k \log D$ elements are classically pre-computed, and then multiplied and added into the work register conditioned on the control qubits:

  Recall that~$D$ is a power of two, write $D = 2^\ell$, write
  \begin{align*}
    z_{d-k+i} = \sum_{j \, = \, 0}^{\ell - 1} 2^j z_{d-k+i, j}
    \: \text{ for } \:
    z_{d-k+i, j} \in \{0, 1\}
    \quad \Rightarrow \quad
    u_i^{z_{d-k+i}} = \prod_{j \, = \, 0}^{\ell - 1} u_i^{2^j z_{d-k+i, j}},
  \end{align*}
  and pre-compute $u_i^{2^j}$ and $u_i^{-2^j}$ for all $j \in [0, \ell) \cap \mathbb Z$ and $i \in [1, k] \cap \mathbb Z$.
  Then multiply~$u_i^{2^j}$ into the work register conditioned on~$z_{d-k+i, j}$ for all $j \in [0, \ell) \cap \mathbb Z$ and $i \in [1, k] \cap \mathbb Z$.
  For~$p$ the product in the work register before multiplying in~$u_i^{2^j}$, and
  \begin{align*}
    u
    &=
    \left\{\def\arraystretch{1.3}
    \begin{array}{ccc}
      u_i^{2^j} & \text{ if } & z_{d-k+i, j} = 1 \\
      1         & \text{ if } & z_{d-k+i, j} = 0
    \end{array}
    \right.
  \end{align*}
  the procedure, when perceived to work in~$\mathbb Z_N$, takes
  \begin{align*}
    &\ket{z_1, \ldots, z_d, p, 0^n, 0^n, 0^{A-3n}} \\
    \text{load u} \rightarrow
    &\ket{z_1, \ldots, z_d, p, u, 0^n, 0^{A-3n}} \\
    \text{multiply and add} \rightarrow
    &\ket{z_1, \ldots, z_d, p, u, pu, 0^{A-3n}} \\
    \text{swap} \rightarrow
    &\ket{z_1, \ldots, z_d, pu, u, p, 0^{A-3n}} \\
    \text{unload~$u$ and load $-u^{-1}$} \rightarrow
    &\ket{z_1, \ldots, z_d, pu, -u^{-1}, p, 0^{A-3n}} \\
    \text{multiply and add} \rightarrow
    &\ket{z_1, \ldots, z_d, pu, -u^{-1}, 0^n = p + pu \cdot -u^{-1}, 0^{A-3n}} \\
    \text{unload $-u^{-1}$} \rightarrow
    &\ket{z_1, \ldots, z_d, pu, 0^n, 0^n, 0^{A-3n}}
  \end{align*}
  where the load and unload operations are conditioned on $z_{d-k+i, j}$.

  The~$S$ qubits required to perform the multiplications and additions fit into the $A - 3n$ ancilla qubits.
  The circuit size is dominated by the $2 k \log D$ multiplications and additions, with a cost of $\ordo(2 k G \log D) = \ordo(n^{1/2} \, G)$ gates.
  The loading, unloading and swap operations may be modelled as requiring~$\ordo(n)$ gates.
  Hence the second product is on par with or less expensive than the first asymptotically.

  Finally, as explained in~\cite{regev23, rv23}, the QFTs of the first~$d$ control registers may be approximated in-place by a circuit of size $\ordo(d \log D(\log \log D + \log d)) = \ordo(n \log n)$ to within $1 / \poly(d)$ with the algorithm of Coppersmith~\cite{coppersmith2002}.
\end{proof}

By plugging concrete expression for~$G$ and~$S$ into Lemma~\ref{lem:quantum-algorithm-Zn} above for various circuits that implement the required arithmetic, we can obtain corollaries that give concrete circuit costs, in analogy with~\cite[Corollary~1.1--1.3 to Theorem~1]{rv23}.

Lemma~\ref{lem:quantum-algorithm-Zn} may furthermore be extended to other groups for which the group operation may be implemented efficiently quantumly.
Note however that it is of course not possible to give asymptotic costs in terms of circuit size and space usage unless the group is explicitly specified.

Note furthermore that for the algorithm to be more efficient than other variations of Shor's algorithm the implementation needs to leverage that all but~$k$ of the~$d$ elements in the product are small.

\subsection{Notes on lattice determinants}
The post-processing in Lemma~\ref{lem:post-process} requires that the lattice~$\mathcal L$ has determinant less than~$2^{d^2}$.
In Regev's original factoring algorithm~\cite{regev23}, the lattice considered is
\begin{align*}
  \mathcal L
  =
  \Bigg\{
    (z_1, \ldots, z_d) \in \mathbb Z^d
    \: \Bigg| \:
    \prod_{i \, = \, 1}^{d} a_i^{z_i} = 1 \:\: (\text{mod } N)
  \Bigg\},
\end{align*}
and Regev shows that its determinant is less than $N \le 2^{d^2}$.

For our extensions of Regev's algorithm, we prove the below, more general, lemma to bound the determinant of lattices of this form.
\begin{lemma}
  \label{lem:determinant}
  Let $g_1, \ldots, g_t$ be $t > 0$ elements of a finite Abelian group~$\mathbb G$, and let
  \begin{align*}
    \mathcal L
    =
    \Bigg\{
      (z_1, \ldots, z_t) \in \mathbb Z^t
      \: \Bigg| \:
      \prod_{i \, = \, 1}^{t} g_i^{z_i} = 1
    \Bigg\}.
  \end{align*}

  Then, the determinant of~$\mathcal L$ is equal to the size of the subgroup of~$\mathbb G$ that is generated by the elements $g_1, \ldots, g_t$.
\end{lemma}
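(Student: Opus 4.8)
The plan is to realize $\mathcal L$ as the kernel of a natural group homomorphism, and then invoke the first isomorphism theorem together with the standard identification of a full-rank sublattice's determinant with its index in $\mathbb Z^t$. Concretely, I would define $\varphi \colon \mathbb Z^t \to \mathbb G$ by $\varphi(z_1, \ldots, z_t) = \prod_{i \, = \, 1}^{t} g_i^{z_i}$. Since $\mathbb G$ is Abelian, $\varphi$ is a group homomorphism from $(\mathbb Z^t, +)$ to $\mathbb G$: the powers of the various $g_i$ commute, so $\varphi(\vect{z} + \vect{z}') = \varphi(\vect{z}) \, \varphi(\vect{z}')$. By the definition of~$\mathcal L$, we have $\ker \varphi = \mathcal L$.

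Next I would identify the image of~$\varphi$. Every element of $\operatorname{im} \varphi$ is a product of integer powers of $g_1, \ldots, g_t$, and hence lies in the subgroup $H = \langle g_1, \ldots, g_t \rangle$; conversely, since~$\mathbb G$ is Abelian, any element of~$H$ --- a priori a finite product of the $g_i^{\pm 1}$ in some order --- may be rewritten as $\prod_{i \, = \, 1}^{t} g_i^{z_i}$ for suitable $z_i \in \mathbb Z$, and so lies in $\operatorname{im} \varphi$. Thus $\operatorname{im} \varphi = H$, and the first isomorphism theorem yields $\mathbb Z^t / \mathcal L \cong H$. In particular $[\mathbb Z^t : \mathcal L] = \abs{H}$, which is finite as~$\mathbb G$ is finite.

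Finally I would translate this index statement into a determinant statement. Because $[\mathbb Z^t : \mathcal L]$ is finite, $\mathcal L$ is a full-rank sublattice of~$\mathbb Z^t$, so its determinant $\det \mathcal L$ is well defined. A standard fact --- obtained, e.g., by putting a basis matrix of~$\mathcal L$ expressed in the standard basis of~$\mathbb Z^t$ into Smith normal form, or from the covolume identity $\operatorname{vol}(\mathbb R^t / \mathcal L) = [\mathbb Z^t : \mathcal L] \cdot \operatorname{vol}(\mathbb R^t / \mathbb Z^t)$ together with $\operatorname{vol}(\mathbb R^t / \mathbb Z^t) = 1$ --- then gives $\det \mathcal L = [\mathbb Z^t : \mathcal L] = \abs{H}$, which is precisely the claimed equality.

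I do not expect any genuine obstacle here: the only points requiring a moment's care are, first, checking that $\varphi$ really is a homomorphism, which uses commutativity of~$\mathbb G$ in an essential way, and second, justifying that ``determinant'' is meaningful at all, i.e.\ that $\mathcal L$ has full rank --- which is exactly where finiteness of~$\mathbb G$ (equivalently, of~$H$) enters. Everything else is the first isomorphism theorem and the index-equals-covolume identity, both entirely routine.
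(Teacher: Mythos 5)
Your proof is correct and is essentially the paper's argument: the paper's cosets $\mathcal S_g$ are exactly the fibers of your homomorphism $\varphi$, so counting them is the same as invoking the first isomorphism theorem to get $[\mathbb Z^t : \mathcal L] = \abs{\langle g_1, \ldots, g_t\rangle}$ and then identifying index with determinant. Your version is marginally more explicit about why $\mathcal L$ has full rank and why the determinant equals the index, but the substance is identical.
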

\begin{proof}
  Let~$\mathbb G_0$ be the subgroup of~$\mathbb G$ that is generated by $g_1, \ldots, g_t$.
  For each group element $g \in \mathbb G_0$, we define the coset~$\mathcal S_g$ of $\mathbb Z^t/\mathcal L$ given by
  \[
    \mathcal S_g
    =
    \Bigg\{
      (z_1, \ldots, z_t) \in \mathbb Z^t
      \: \Bigg| \:
      \prod_{i \, = \, 1}^{t} g_i^{z_i} = g
    \Bigg\}.
  \]

  We obviously have that these cosets are distinct, and that every $\vect{z} \in \mathbb Z^t$ belongs to one of these cosets.
  Since~$\mathcal L$ is an integer lattice, the determinant of~$\mathcal L$ is given by the number of such distinct cosets of~$\mathbb Z^t/\mathcal L$.
  Since each element of~$\mathbb G_0$ corresponds to a unique coset~$\mathcal S_g$, the determinant of~$\mathcal L$ therefore equals the size of the subgroup~$\mathbb G_0$ generated by the elements $g_1, \ldots, g_t \in \mathbb G$.
\end{proof}

In particular, if the group~$\mathbb G$ is of size less than~$2^n$, the above lemma guarantees that, no matter the choice of elements $g_1, \ldots, g_t$, the corresponding lattice has determinant less than $2^n \le 2^{d^2}$ when $d = \ceil{\sqrt{n}\,}$.

\subsection{Notes on the choice of small generators}
\label{sect:rationale-choice-generators}
Both Regev's original factoring algorithm and our extended algorithms assume that there exist short \emph{interesting} vectors in a lattice
\[
  \mathcal L
  =
  \Bigg\{
    (z_1, \ldots, z_d) \in \mathbb Z^d
    \: \Bigg| \:
    \prod_{i \, = \, 1}^{d} g_i^{z_i} = 1
  \Bigg\}
\]
for some choice of generators~$g_1, \ldots, g_d \in \mathbb G$, most or all of which are small.

What constitutes an interesting vector differs between our algorithms and Regev's algorithm, but in both cases it is sufficient that~$\mathcal L$ has a short basis.\footnote{For Regev's algorithm, a short basis for $\mathcal L$ ensures that there are short vectors in $\mathcal L$ that are not in $\mathcal L^0 \subsetneq \mathcal L$, for $\mathcal L^0$ as in Sect.~\ref{sect:recalling-regevs-factoring-algorithm}.}

In turn, whether or not~$\mathcal L$ has a short basis depends on the specific choice of generators.
To give an example of a bad choice of generators, we can let $g_i = g^i$ for~$g$ a small generator of order~$r$.
In this case,
\begin{align*}
  \mathcal L
  &=
  \Bigg\{
    (z_1, \ldots, z_d) \in \mathbb Z^d
    \: \Bigg| \:
    \prod_{i \, = \, 1}^{d} g^{i\cdot z_i} = 1
  \Bigg\} \\
  &=
  \Bigg\{
    (z_1, \ldots, z_d) \in \mathbb Z^d
    \: \Bigg| \:
  \sum_{i \, = \, 1}^{d} i \cdot z_i = 0 \:\: (\text{mod } r)
  \Bigg\},
\end{align*}
and any basis of~$\mathcal L$ would have to contain a vector~$\vect{z} = (z_1, \ldots, z_d)$ such that the sum over $i \cdot z_i$ is equal to a non-zero multiple of~$r$.
For such~$\vect{z}$, we thus have that
\[
  r \le \abs{\, \sum_{i \, = \, 1}^d i \cdot z_i \,}
  \le
  d \sum_{i\, = \, 1}^d \abs{z_i}
  \le
  d^{3/2} \norm{\vect{z}}
\]
and hence, the basis would have to contain a vector of length~$\norm{\vect{z}} \ge r / d^{3/2}$, which is too large in this context for the basis to be considered short.

If the generators are chosen to be void of intentionally introduced relations, it is natural to assume~$\mathcal L$ to have some properties similar to those of random lattices.
In particular, it is natural to assume that there is a basis of~$\mathcal L$ where each vector is not significantly longer than the shortest non-zero vector in $\mathcal L$.
By Minkowski's first theorem and the fact that $\det \mathcal L \le 2^n$, we thus expect~$\mathcal L$ to have a basis where each basis vector has length at most $\exp(\ordo(n/d))$.

Note however that we do not prove that such a basis exists for a given choice of generators.
Rather, we make a heuristic number theoretic assumption that such a basis exists, in analogy with Regev~\cite{regev23}.

For $\mathbb G = \mathbb Z_N^*$, for~$N$ prime or composite, a natural choice to avoid intentionally introduced relations is to let most of the generators $g_1, \ldots, g_d$ be small distinct primes coprime to~$N$.
Our concrete assumption is therefore the following:
\begin{assumption} \label{assump:basis-generic}
  Let~$K$ be a constant, $N$ be an~$n$-bit integer coprime to the first $d = O(\sqrt{n})$ primes, and $g_1, \ldots, g_{d}$ be the first $d$ primes perceived as elements in~$\mathbb Z_N^*$.
  Then, $g_1, \ldots, g_d$ span $\mathbb Z_N^*$ and the lattice
  \[
    \Bigg\{
    (z_1, \ldots, z_d) \in \mathbb Z^{d}
      \: \Bigg| \:
      \prod_{i \, = \, 1}^{d} g_i^{z_i} = 1
    \Bigg\}
  \]
  has a basis where each basis vector has norm at most $T = \exp(Kn/d)$.
\end{assumption}

Note that we could instead make a more flexible assumption, by letting the generators $g_1, \ldots, g_{d}$ be small distinct primes, but not necessarily the first $d$ primes.
For our algorithm to work it would then be sufficient for the assumption to hold with a noticeable probability over the specific choice of these generators, as we could re-run the algorithm with different choices of generators.

Furthermore, note that the above assumption predicts that the lattice has increasingly shorter bases as $d$ grows larger.
As such, choosing a larger value for $d$ may be preferable in practice.
This being said, for simplicity we consider only the special choice of using the first $d = \ceil{\sqrt{n}}$ primes as generators in this paper.

Finally, note that our extensions of Regev's algorithm do not directly depend on the lattice in Assumption~\ref{assump:basis-generic}.
Instead, they require a closely related lattice to have a short basis.
That this is the case does however easily follow from Assumption~\ref{assump:basis-generic}, as detailed in the following lemma:
\begin{lemma} \label{lem:basis-dlog}
  Let~$N$ be an $n$-bit integer and $g_1, \ldots, g_{d}$ be the first $d = O(\sqrt{n})$ primes
  perceived as elements in~$\mathbb Z_N^*$.
  Furthermore, let~$K$ be a constant,~$k$ be some small constant, and $u_1, \ldots, g_k$ be arbitrary elements in $\mathbb Z_N^*$.
  Then, under Assumption~\ref{assump:basis-generic}, the lattice
  \[
    \Bigg\{
      (z_1, \ldots, z_{d+k}) \in \mathbb Z^{d+k}
      \: \Bigg| \:
      \prod_{i \, = \, 1}^{d} g_i^{z_i} \prod_{i \, = \, 1}^{k} u_i^{z_{d+i}} = 1
    \Bigg\}
  \]
  has a basis where each basis vector has norm at most $T = \sqrt{d+1} \cdot \exp(Kn/d)$.
\end{lemma}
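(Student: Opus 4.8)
Write $\mathcal L$ for the lattice in the statement, and $\mathcal L_0 \subset \mathbb Z^d$ for the lattice of Assumption~\ref{assump:basis-generic}, with the short basis $\vect b_1, \ldots, \vect b_d$ furnished by that assumption (so $\norm{\vect b_i} \le \exp(Kn/d)$ for all~$i$). The plan is to build an explicit basis of the $(d+k)$-dimensional lattice $\mathcal L$ out of the zero-padded vectors $(\vect b_1, \vect 0), \ldots, (\vect b_d, \vect 0) \in \mathbb Z^{d+k}$ together with $k$ further vectors, one for each non-small generator $u_j$, each of which ``absorbs'' $u_j$ against a short product of the $g_i$.

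Since the $g_i$ span $\mathbb Z_N^*$ by Assumption~\ref{assump:basis-generic}, each $u_j$ can be written as $u_j = \prod_{i \, = \, 1}^{d} g_i^{e_{i,j}}$ for some $\vect e_j = (e_{1,j}, \ldots, e_{d,j}) \in \mathbb Z^d$, and the set of all exponent vectors representing a fixed $u_j$ is precisely the coset $\vect e_j + \mathcal L_0$. I am thus free to replace $\vect e_j$ by a shortest representative of this coset. Size-reducing an arbitrary initial representative against a Gram--Schmidt orthogonalization $\vect b_1^*, \ldots, \vect b_d^*$ of the basis from Assumption~\ref{assump:basis-generic} (i.e.\ Babai's nearest-plane step, equivalently the standard covering-radius bound) produces a representative with
\[
  \norm{\vect e_j} \;\le\; \frac{1}{2}\Bigl(\sum_{i \, = \, 1}^{d} \norm{\vect b_i^*}^2\Bigr)^{\!1/2} \;\le\; \frac{\sqrt d}{2}\exp(Kn/d) \;\le\; \sqrt d\cdot\exp(Kn/d),
\]
using $\norm{\vect b_i^*} \le \norm{\vect b_i} \le \exp(Kn/d)$.

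For each $j \in [1,k] \cap \mathbb Z$, let $\vect f_j \in \mathbb Z^{d+k}$ be the vector whose first $d$ coordinates equal $\vect e_j$, whose $(d+j)$-th coordinate equals $-1$, and all of whose other coordinates vanish. Then $\prod_{i \, = \, 1}^{d} g_i^{e_{i,j}} \cdot u_j^{-1} = 1$, so $\vect f_j \in \mathcal L$, and using $\exp(2Kn/d) \ge 1$ we get
\[
  \norm{\vect f_j}^2 \;=\; \norm{\vect e_j}^2 + 1 \;\le\; d\exp(2Kn/d) + 1 \;\le\; (d+1)\exp(2Kn/d),
\]
that is, $\norm{\vect f_j} \le \sqrt{d+1}\cdot\exp(Kn/d) = T$; and each $(\vect b_i, \vect 0)$ lies in $\mathcal L$ with $\norm{(\vect b_i, \vect 0)} = \norm{\vect b_i} \le \exp(Kn/d) \le T$. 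It then remains to check that $(\vect b_1, \vect 0), \ldots, (\vect b_d, \vect 0), \vect f_1, \ldots, \vect f_k$ is a \emph{basis} of $\mathcal L$, not merely a list of short lattice vectors. Linear independence is immediate from the block-triangular shape of the $(d+k)\times(d+k)$ matrix with these rows: the basis matrix of $\mathcal L_0$ sits in the leading $d\times d$ block, a zero block in the top right, and minus the identity in the trailing $k\times k$ block, so the determinant has absolute value $\det\mathcal L_0 \ne 0$. For spanning, given $\vect z = (z_1, \ldots, z_{d+k}) \in \mathcal L$, the vector $\vect z + \sum_{j \, = \, 1}^{k} z_{d+j}\vect f_j$ has vanishing last $k$ coordinates and lies in $\mathcal L$, hence equals $(\vect z'', \vect 0)$ with $\prod_{i \, = \, 1}^{d} g_i^{z''_i} = 1$, i.e.\ $\vect z'' \in \mathcal L_0$ is an integer combination of $\vect b_1, \ldots, \vect b_d$; tracing back, $\vect z$ is an integer combination of the proposed basis vectors. (Alternatively, by Lemma~\ref{lem:determinant} both $\det\mathcal L$ and $\det\mathcal L_0$ equal $\abs{\mathbb Z_N^*}$, which matches the absolute value of the determinant above, so the proposed full-rank sublattice must be all of $\mathcal L$.)

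The one step where the quantitative part of Assumption~\ref{assump:basis-generic} really enters, and hence the main point to get right, is the bound on $\norm{\vect e_j}$: naive coordinate-wise rounding against $\vect b_1, \ldots, \vect b_d$ only gives $\norm{\vect e_j} \le \tfrac12\sum_i\norm{\vect b_i} \le \tfrac{d}{2}\exp(Kn/d)$, which is too large once $d > 4$, so one genuinely needs the Gram--Schmidt (covering-radius) bound above to fit inside the factor $\sqrt{d+1}$. Everything else is routine linear algebra over $\mathbb Z$.
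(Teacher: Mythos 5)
Your proposal is correct and follows essentially the same route as the paper: zero-pad the short basis of the $d$-dimensional lattice from Assumption~\ref{assump:basis-generic}, and for each $u_j$ append one vector whose first $d$ coordinates are a covering-radius-short representative of the exponent coset and whose $(d+j)$-th coordinate is $-1$, yielding the same block-triangular basis and the same bound $\sqrt{d+1}\cdot\exp(Kn/d)$. You merely supply details the paper leaves implicit, namely the Gram--Schmidt justification of the covering-radius bound $\sqrt{d}\exp(Kn/d)$ and the explicit verification that the constructed vectors span the whole lattice.
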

\begin{proof}
  Consider the lattice $\mathcal L$ given by
  \[
    \Bigg\{
    (z_1, \ldots, z_d) \in \mathbb Z^{d}
      \: \Bigg| \:
      \prod_{i \, = \, 1}^{d} g_i^{z_i} = 1
    \Bigg\}
  \]
  that, under Assumption~\ref{assump:basis-generic}, has a basis~$\vect{B}$ where each basis vector has norm at most $\exp(Kn/d)$.
  By Assumption~\ref{assump:basis-generic} we also have that $g_1, \ldots, g_d$ generate $\mathbb Z_N^*$.

  Therefore, for any of the $u_i$, there is a vector $\vect{z} = (z_1, \ldots, z_d) \in \mathbb Z^d$ such that $u_i = \prod_{i \, = \, 1}^{d} g_i^{z_i}$.
  As the longest vector in $\vect B$ is no longer than $\exp(Kn/d)$, the covering radius of $\mathcal L$ is at most $\sqrt{d}\exp(Kn/d)$.
  As such, $\vect{z}$ can be written as a vector in $\mathcal L$ plus a vector~$\vect z_{u_i}$ of length at most $\sqrt{d}\exp(Kn/d)$.

  We then see that a basis for the lattice
  \[
    \Bigg\{
      (z_1, \ldots, z_{d+k}) \in \mathbb Z^{d+k}
      \: \Bigg| \:
      \prod_{i \, = \, 1}^{d} g_i^{z_i} \prod_{i \, = \, 1}^{k} u_i^{z_{d+i}} = 1
    \Bigg\}
  \]
  is given by
  \[\arraycolsep=7pt\def\arraystretch{2}
  \left(
  \begin{array}{c|c}
    \vect{B} & \vect{0}_{d\times k} \\
    \hline
    \vect{z}_{u_1} & \\
    \vdots & -\vect{I}_{k} \\
    \vect{z}_{u_k} &
  \end{array}
  \right)
  \]
  where each basis vector has length at most $\sqrt{d+1} \cdot \exp(Kn/d)$.
\end{proof}

\section{Computing discrete logarithms}
In what follows, let~$\mathbb G$ be a finite Abelian group, let~$g \in \mathbb G$ be a generator of a cyclic group $\langle g \rangle \subseteq \mathbb G$ of order~$r$, and let $x = g^e$ for~$e \in (0, r) \cap \mathbb Z$.
Given~$g$ and~$x$, our goal is then to compute the discrete logarithm $e = \log_g x$.

To this end, we first give a basic algorithm in Sect.~\ref{sect:dlog-pre-compute} that computes the discrete logarithm whilst requiring pre-computation for the group.
We then show in Sect.~\ref{sect:dlog-inc-compute} how the pre-computation can be performed as a part of the algorithm.

Both variants work if Assumption~\ref{assump:basis-generic} holds, but a weaker heuristic assumption is sufficient for the algorithm with pre-computation to work.
We have heuristically verified that both variants work in practice by means of simulations.

\subsection{A basic algorithm that requires pre-computation}
\label{sect:dlog-pre-compute}
Let $g_1, \, \ldots, \, g_{d-1}$ be small distinct elements in $\langle g \rangle$, such that $g_i = g^{e_i}$ where $e_i \in (0, r) \cap \mathbb Z$ for all $i \in [1, d) \cap \mathbb Z$.
In particular, for $g \in \mathbb Z_p^*$ for~$p$ a large $n$-bit prime, and for $d = \ceil{\sqrt{n}\,}$, we take $g_1, \ldots, g_{d-1}$ to be the first $d-1$ primes that are in~$\langle g \rangle$ and that when perceived as elements of~$\langle g \rangle$ and distinct from~$x$.

To find such elements efficiently, and to be able to directly call upon Lemma~\ref{lem:quantum-algorithm-Zn} below, we assume in this section that $\langle g \rangle = \mathbb Z_p^*$.
Note that this assumption does not imply a loss of generality:
Suppose that $\gamma = g^{e_\gamma}$ generates some subgroup of~$\mathbb Z_p^*$ of order~$r_\gamma$, that $x = \gamma^{e_x} = g^{e}$ and that we seek $e_x = \log_\gamma x \in [0, r_\gamma) \cap \mathbb Z$.
Then, we may compute $e = \log_g x$ and $e_\gamma = \log_g \gamma$, and finally $e_x = e / e_\gamma \text{ mod } r_\gamma$.

By Lemma~\ref{lem:quantum-algorithm-Zn}, there exists an efficient quantum algorithm that outputs a vector $\vect{w} \in [0, 1)^d$ that, except for with probability $1 / \poly(d)$, is within distance $\sqrt{d/2} \cdot 2^{-C\sqrt{n}}$ of a uniformly chosen $\vect{v} \in \mathcal L_{x}^*/\mathbb Z^d$, where
\begin{align*}
  \mathcal L_x
  =
  \Bigg\{
    (z_1, \ldots, z_d) \in \mathbb Z^d
    \: \Bigg| \:
    x^{z_d} \prod_{i \, = \, 1}^{d-1} g_i^{z_i} = 1
  \Bigg\}
\end{align*}
for $C > 0$ some constant.
By performing $m \ge d + 4$ runs of this quantum algorithm, we obtain~$m$ such vectors $\vect{w}_1, \ldots, \vect{w}_m$.

By Lemma~\ref{lem:post-process}, there is an efficient classical algorithm that, with probability at least $1/4$, recovers a basis for a sublattice~$\Lambda$ of~$\mathcal L_x$ given the vectors $\vect{w}_1, \ldots, \vect{w}_m$.
Furthermore, $\Lambda$ contains all vectors in~$\mathcal L_x$ of norm at most~${T = \exp(\ordo(\sqrt{n}))}$.
Let
\begin{align*}
  \mathcal L_x^0
  =
  \Bigg\{
    (z_1, \ldots, z_d) \in \mathcal L_x
    \: \Bigg| \:
    \gcd(z_d, r) \neq 1
  \Bigg\} \subset \mathcal L_x
\end{align*}
and note that $\mathcal L_x^0$ is a sublattice of~$\mathcal L_x$.

We make the heuristic assumption that there is a non-zero vector ${\vect{y} \in \mathcal L_x \backslash \mathcal L_x^0}$ of norm at most $T$.
It then follows that $\vect{y} \in \Lambda$.
By extension, it follows that at least one vector $\vect{z} \in \mathcal S$ must also be in~$\mathcal L_x \backslash \mathcal L_x^0$, as the sublattice~$\Lambda$ that the vectors in~$\mathcal S$ generate would otherwise not include~$\vect{y}$.

Suppose that we know~$e_i$ such that $g_i = g^{e_i}$ for $i \in [1, d) \cap \mathbb Z$ --- e.g.~because we have pre-computed these using Shor's algorithm for computing discrete logarithms~\cite{shor94, shor97, ekera-revisiting}.
Then, given $\vect{z} = (z_1, \ldots, z_d) \in \mathcal L_x\backslash \mathcal L_x^0$, we have that
\begin{align*}
  x^{z_d} \prod_{i \, = \, 1}^{d-1} g_i^{z_i}
  =
  g^{e_1 z_1 + \ldots + e_{d-1} z_{d-1} + e z_d}
  =
  g^0
  =
  1,
\end{align*}
which implies that
\begin{align*}
  e_1 z_1 + \ldots + e_{d-1} z_{d-1} + e z_d = 0 \quad (\text{mod } r).
\end{align*}

Since $\vect{z} \in \mathcal L_x \backslash \mathcal L_x^0$, we have that $\gcd(z_d, r) = 1$, so~$z_d$ is invertible mod~$r$, and hence we can compute the discrete logarithm
\begin{align*}
  e = \log_g x = -(e_1 z_1 + \ldots + e_{d-1} z_{d-1}) / z_d \quad (\text{mod } r).
\end{align*}

We have heuristically verified that the above procedure yields a vector in $\mathcal L_x \backslash \mathcal L_x^0$, and that we can solve for the logarithm~$e$, by means of simulations.
For $n = 2048$ bits, the simulations indicate that it suffices to take $C \approx 2$, and that the success probability is close to one after $d + 4$ runs of the quantum algorithm.

\subsubsection{Notes on safe-prime groups and Schnorr groups}
\label{sect:notes-safe-prime-and-schnorr-groups}
In cryptographic applications of the discrete logarithm problem in~$\mathbb Z_p^*$, either safe-prime groups or Schnorr groups are typically used in practice:

For safe-prime groups, we have that $p - 1 = 2 r_\gamma$ for~$r_\gamma$ a prime, where~$\gamma$ generates the $r_\gamma$-order subgroup.
In such groups, it is easy to find small~$g_i$ in~$\langle \gamma \rangle$.
Hence, it would be possible to modify the quantum algorithm in the previous section to work in~$\langle \gamma \rangle$ instead of working in all of~$\mathbb Z_p^*$ for safe-prime groups.

This being said, the orders of~$\langle \gamma \rangle$ and~$\mathbb Z_p^*$ differ only by a factor of two, so the advantage of working in $\langle \gamma \rangle$ is small.
Furthermore, making the aforementioned modification would require us to design a special version of Lemma~\ref{lem:quantum-algorithm-Zn} to e.g.~use separate bounds for the bit length of group elements and of the group order, respectively.
We therefore only give the basic algorithm that works in~$\mathbb Z_p^*$.

For Schnorr groups, we have that $p - 1 = 2 k r_\gamma$ for~$r_\gamma$ a prime and~$k$ a large integer, where~$\gamma$ generates the $r_\gamma$-order subgroup.
In such groups, it is typically hard to find small~$g_i$ that are in~$\langle \gamma \rangle$.
To overcome this problem, we need to work in a larger group, again leaving us in a similar situation to that for safe-prime groups:
We could for instance work in the $k r_\gamma$-order subgroup, but this would only bring a small advantage compared to working in~$\mathbb Z_p^*$, so we only give the basic algorithm that works in~$\mathbb Z_p^*$.

In summary, the complexity of the quantum algorithm in the previous section depends exclusively on the bit length~$n$ of~$p$, irrespective of whether we are in a Schnorr group or a safe-prime group, and in the latter case irrespective of whether the logarithm is short or full length.
For Shor's algorithm, when solving $x = \gamma^{e_x}$ for~$e_x$, the complexity depends on~$n$ and the bit length of~$r_\gamma$ when the algorithm is adapted as in~\cite{ekera-revisiting, ekera-general}, or on~$n$ and the bit length of~$e_x$ when it is adapted as in~\cite{ekera-hastad, ekera-pp, ekera-short-success}.

\subsubsection{Notes on other options for the pre-computation}
Another way to perform the pre-computation is to use an analogous procedure to compute a basis of~$\mathcal L_{g}$, and to use it to compute $\vect{e} = (e_1, \, \ldots, \, e_{d-1})$, where
\begin{align*}
  \mathcal L_g
  =
  \Bigg\{
    (z_1, \ldots, z_d) \in \mathbb Z^d
    \: \Bigg| \:
    g^{z_d} \prod_{i \, = \, 1}^{d-1} g_i^{z_i} = 1
  \Bigg\}.
\end{align*}

This requires a stronger heuristic assumption to be made, on the existence of a basis of~$\mathcal L_{g}$ with all basis vectors having norm at most $T = \exp(\ordo(\sqrt{n}))$.
In the next section we integrate this pre-computation into the algorithm.

\subsection{Integrating the pre-computation into the algorithm}
\label{sect:dlog-inc-compute}
In the algorithm in Sect.~\ref{sect:dlog-pre-compute}, we have to pre-compute~$e_i \in (0, r) \cap \mathbb Z$ such that $g_i = g^{e_i}$ for all $i \in [1, d) \cap \mathbb Z$.
To avoid this pre-computation, we can define a lattice~$\mathcal L_{x,g}$ that depends on both~$g \in \mathbb G$ of order $r$ and~$x = g^e$ for $e \in (0, r) \cap \mathbb Z$, alongside $d - 2$ small elements $g_1, \ldots, g_{d-2} \in \mathbb G$.
In particular, for~$g \in \mathbb Z_p^* = \mathbb G$ for~$p$ a large $n$-bit prime and $d = \ceil{\sqrt{n}\,}$, we take $g_1, \ldots, g_{d-2}$ to be the first $d-2$ primes that when perceived as elements of~$\mathbb Z_p^*$ are distinct from~$g$ and~$x$.

By Lemma~\ref{lem:quantum-algorithm-Zn}, there exists an efficient quantum algorithm that outputs a vector $\vect{w} \in [0, 1)^d$ that, except for with probability $1 / \poly(d)$, is within distance ${\sqrt{d/2} \cdot 2^{-C\sqrt{n}}}$ of a uniformly chosen $\vect{v} \in \mathcal L_{x,g}^*/\mathbb Z^d$, where
\begin{align*}
  \mathcal L_{x,g}
  =
  \Bigg\{
  (z_1, \ldots, z_d) \in \mathbb Z^{d}
    \: \Bigg| \:
    x^{z_{d-1}} g^{z_{d}} \prod_{i \, = \, 1}^{d-2} g_i^{z_i} = 1
  \Bigg\},
\end{align*}
for $C > 0$ some constant.
By performing $m \ge d + 4$ runs of this quantum algorithm, we obtain~$m$ such vectors $\vect{w}_1, \ldots, \vect{w}_m$.

By Lemma~\ref{lem:post-process}, there is an efficient classical algorithm that, with probability at least $1/4$, recovers a basis for a sublattice~$\Lambda$ of~$\mathcal L_{x, g}$ given the vectors $\vect{w}_1, \ldots, \vect{w}_m$.
Furthermore,~$\Lambda$ contains all vectors in~$\mathcal L_{x, g}$ of norm at most~$T = \exp(\ordo(\sqrt n))$.

Lemma~\ref{lem:basis-dlog} gives that, under Assumption~\ref{assump:basis-generic}, the lattice $\mathcal L_{x, g}$ has a basis where each basis vector has norm at most $\exp(\ordo(n/d)) = \exp(\ordo(\sqrt{n})) = T$.
Hence $\Lambda$ is the full lattice $\mathcal L_{x, g}$, and Lemma~\ref{lem:post-process} recovers a basis~$\vect{B}$ for this lattice.
By the definition of~$\mathcal L_{x,g}$, we have that ${\vect{v}_e = (0, \ldots, 0, 1, -e) \in \mathcal L_{x,g}}$ as ${x^1 g^{-e} = 1}$.
Since we know a basis~$\vect B$ for~$\mathcal L_{x,g}$, we can efficiently recover~$\vect{v}_e$ and by extension~$e$.

Note furthermore that $\vect{v}_r = (0, \ldots, 0, r) \in \mathcal L_{x,g}$, and that we can efficiently recover~$\vect{v}_r$ and hence~$r$.
Hence, if~$r$ is unknown, we can recover~$r$ along with~$e$ at no additional quantum cost.
This is useful in Sect.~\ref{sect:reduce-circuit-size} below where~$r$ is needed.

We have heuristically verified that the above procedure yields a basis of~$\mathcal L_{x,g}$, and that we can solve for the logarithm~$e$ and order~$r$, by means of simulations.
For $n = 2048$ bits, the simulations indicate that it suffices to take $C \approx 2$, and that the success probability is close to~$1$ after $d + 4$ runs of the quantum algorithm.

We are now ready to summarize the above analysis in a theorem:
\begin{theorem}
  \label{thm:main}
  Let $p$ be an $n$-bit prime, and let $d = \ceil{\sqrt{n} \,}$.
  Let $g \in \mathbb Z_p^*$ be of order~$r$, and let $x = g^e$ for $e \in (0, r) \cap \mathbb Z$.
  As in~\cite[Theorem~1]{rv23}, let~$G$ be the gate cost of a quantum circuit that takes
  \begin{align*}
    \ket{a, b, t, 0^S} \rightarrow \ket{a, b, (t + ab) \text{ mod } p, 0^S}
  \end{align*}
  for $a, b, t \in [0, p) \cap \mathbb Z$ and~$S$ the number of ancilla qubits required.
  Then, under Assumption~\ref{assump:basis-generic}, there is an efficient classical algorithm that, by calling a quantum circuit $d + 4$ times, yields the logarithm~$e$ with probability at least~$1/4$.
  This quantum circuit has gate cost $\ordo(n^{1/2} \, G + n^{3/2})$, and it requires
  \begin{align*}
    S + \left( \frac{C}{\log \phi} + 8 + o(1) \right) n
  \end{align*}
  qubits of space, for some constant $C > 0$ and~$\phi$ the golden ratio.
\end{theorem}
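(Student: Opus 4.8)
The proof will simply assemble the ingredients developed above, following Sect.~\ref{sect:dlog-inc-compute}. The first step is to instantiate Lemma~\ref{lem:quantum-algorithm-Zn} with $k = 2$, taking $u_1 = x$ and $u_2 = g$, and letting $g_1, \ldots, g_{d-2}$ be the first $d - 2$ primes that, as elements of $\mathbb{Z}_p^*$, are distinct from $x$ and $g$. For $n$ sufficiently large these primes are all smaller than $p$, so $p$ is coprime to them (indeed to the first $d$ primes), and $d = \ceil{\sqrt{n}\,} > k$; hence Lemma~\ref{lem:quantum-algorithm-Zn} applies to the lattice $\mathcal{L}_{x,g}$ of Sect.~\ref{sect:dlog-inc-compute} and furnishes a quantum circuit of gate cost $\ordo(n^{1/2} \, G + n^{3/2})$ and space $S + (C/\log\phi + 8 + o(1))\,n$ qubits, each run of which outputs a vector $\vect w \in [0,1)^d$ that, except with probability $1/\poly(d)$, lies within $\delta := \sqrt{d/2}\cdot 2^{-C\sqrt{n}}$ of a uniformly random coset of $\mathcal{L}_{x,g}^*/\mathbb{Z}^d$.

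I would then run this circuit $m = d + 4$ times, obtaining $\vect w_1, \ldots, \vect w_m$ together with the underlying cosets $\vect v_1, \ldots, \vect v_m$; a union bound shows all $\vect w_i$ are within $\delta$ of their $\vect v_i$ except with probability $1/\poly(d)$. Conditioned on this, I apply Lemma~\ref{lem:post-process}: its hypotheses hold since $m = \ordo(d)$ with $m \ge d + 4$, since $\delta = \exp(-\ordo(d))$ can be made sufficiently small by choosing the constant $C$ large enough, and since by Lemma~\ref{lem:determinant} the determinant of $\mathcal{L}_{x,g}$ equals the order of the subgroup of $\mathbb{Z}_p^*$ generated by $g_1, \ldots, g_{d-2}, x, g$, hence is at most $p - 1 < 2^{n} \le 2^{d^2}$. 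With a norm bound $T = \exp(\ordo(d))$ chosen as below, Lemma~\ref{lem:post-process} recovers, with probability at least $1/4$ (over the $\vect v_i$), a basis of a sublattice $\Lambda \subseteq \mathcal{L}_{x,g}$ containing every vector of $\mathcal{L}_{x,g}$ of norm at most $T$.

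The crux is to conclude that $\Lambda = \mathcal{L}_{x,g}$, and this is exactly where Assumption~\ref{assump:basis-generic} enters. By Lemma~\ref{lem:basis-dlog}, applied with the $d - 2$ small primes and the two arbitrary elements $x, g$, the lattice $\mathcal{L}_{x,g}$ has a basis all of whose vectors have norm at most $\sqrt{d-1}\cdot\exp(Kn/(d-2)) = \exp(\ordo(\sqrt{n})) = \exp(\ordo(d))$; choosing the $T$ above to be at least this bound forces $\Lambda$ to contain a full basis of $\mathcal{L}_{x,g}$, so $\Lambda = \mathcal{L}_{x,g}$ and Lemma~\ref{lem:post-process} returns a basis $\vect B$ of $\mathcal{L}_{x,g}$. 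Since $x = g^e$, the definition of $\mathcal{L}_{x,g}$ gives $(0, \ldots, 0, 1, -e) \in \mathcal{L}_{x,g}$ and $(0, \ldots, 0, r) \in \mathcal{L}_{x,g}$, and both vectors — hence $r$ and $e \bmod r$, and therefore $e \in (0, r) \cap \mathbb{Z}$ after reduction — can be read off from $\vect B$ by standard integer linear algebra, e.g.\ via the Hermite normal form of the rank-two sublattice of $\mathcal{L}_{x,g}$ supported on the last two coordinates. Every classical step runs in polynomial time, the circuit is invoked $d + 4$ times, and the overall success probability is at least $1/4$ (up to the lower-order $1/\poly(d)$ per-run failure). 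I expect the only real work beyond this bookkeeping to be making the parameter hierarchy consistent — $\delta$ small enough for Lemma~\ref{lem:post-process}, the Lemma~\ref{lem:post-process} norm bound $T$ simultaneously large enough to exceed the Lemma~\ref{lem:basis-dlog} basis bound, and $\det \mathcal{L}_{x,g}$ still below $2^{d^2}$ — which is precisely what forces $d = \ceil{\sqrt{n}\,}$ and a sufficiently large constant $C$; the heuristic content is confined entirely to Assumption~\ref{assump:basis-generic}, used through Lemma~\ref{lem:basis-dlog} only to upgrade $\Lambda$ to all of $\mathcal{L}_{x,g}$.
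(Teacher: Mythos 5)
Your proposal is correct and follows essentially the same route as the paper: the paper's proof of Theorem~\ref{thm:main} simply defers to the analysis of Sect.~\ref{sect:dlog-inc-compute}, which is precisely the chain you reconstruct (Lemma~\ref{lem:quantum-algorithm-Zn} with $k=2$ and $u_1 = x$, $u_2 = g$; $m = d+4$ runs; Lemma~\ref{lem:post-process} with the determinant bound from Lemma~\ref{lem:determinant}; Lemma~\ref{lem:basis-dlog} under Assumption~\ref{assump:basis-generic} to upgrade the recovered sublattice to all of $\mathcal L_{x,g}$; and reading off $(0,\ldots,0,1,-e)$ and $(0,\ldots,0,r)$ from the basis). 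The only slip is the phrase about obtaining the cosets $\vect v_1,\ldots,\vect v_m$ ``together with'' the $\vect w_i$ --- the cosets are never observed, only the nearby $\vect w_i$ --- but nothing in your argument actually uses them, so this is cosmetic.
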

\begin{proof}
  The proof follows from the above analysis, and from Lemma~\ref{lem:quantum-algorithm-Zn} where the gate and space costs of the quantum circuit are analyzed.
\end{proof}

The above theorem is specific to~$\mathbb Z_p^*$ for~$p$ prime since there is a notion of small group elements in such groups, and since such groups are extensively used in cryptography.
As previously stated, the algorithm may be generalized to other Abelian groups, but for the algorithm to have an advantage over other algorithms in the literature there must exist a notion of small elements in the group.

\subsection{Notes on reducing the circuit size by pre-computation}
\label{sect:reduce-circuit-size}
In contrast to the algorithm in Sect.~\ref{sect:dlog-pre-compute}, the algorithm in Sect.~\ref{sect:dlog-inc-compute} does not require any pre-computation, and it does not require $g$ to generate~$\mathbb G$.

These benefits come with the downside, however, of the quantum part of the algorithm having to exponentiate both~$x$ and~$g$, where neither element is guaranteed to be small.
In practice, these exponentiations constitutes a significant fraction of the cost of each run of the quantum part of the algorithm.
In turn, this makes the algorithm in Sect.~\ref{sect:dlog-inc-compute} somewhat less efficient than the algorithm in Sect.~\ref{sect:dlog-pre-compute} that only has to exponentiate~$x$ at the expense of pre-computing $e_1, \ldots, e_{d-1}$ with respect to~$g$.

One way to overcome this issue is to let $g'$ be a small element in $\mathbb G$ such that $g \in \langle g' \rangle \subseteq \mathbb G$, and to first pre-compute $e_g = \log_{g'} g$ quantumly.
It then suffices to compute $e_x = \log_{g'} x$ quantumly for each~$x$, and to return $e = e_x / e_g \text{ mod } r$.
All but one of the~$d$ elements that are exponentiated in each run of the quantum part of the algorithm are then small as desired.
Both the pre-computation of~$e_g$ and the computation of~$e_x$ for different~$x$ may be performed with the quantum algorithm in Sect.~\ref{sect:dlog-inc-compute}, whilst leveraging that~$g'$ is small when implementing the arithmetic so as to reduce the quantum cost.
This results in an algorithm with similar performance and similar pre-computation as the one given in Sect.~\ref{sect:dlog-pre-compute}, but with the benefit of not requiring $g$ to generate~$\mathbb G$.

\subsection{Notes on computing multiple logarithms simultaneously}
As is the case for Shor's algorithm --- see e.g.~\cite{litinski} and~\cite{hyy23} --- we note that our extension of Regev's algorithm can be used to compute information on multiple discrete logarithms in each run by sampling vectors close to $\mathcal L^*_{x_1, \ldots, x_k, g} / \mathbb Z^d$, where
\begin{align*}
  \mathcal L_{x_1, \ldots, x_k, g}
  =
  \Bigg\{
    (z_1, \ldots, z_d) \in \mathbb Z^d
    \: \Bigg| \:
    g^{z_d}
    \cdot
    \prod_{i \, = \, 1}^{d-k} g_i^{z_i}
    \cdot
    \prod_{i \, = \, 1}^{k} x_i^{z_{d-k+i}}
    =
    1
  \Bigg\}
\end{align*}
and $x_i = g^{e_i}$ for all $i \in [1, k] \cap \mathbb Z$, instead of sampling vectors close to $\mathcal L^*_{x, g} / \mathbb Z^d$.

\section{Practical considerations}
In this section, we discuss various practical considerations when implementing both Regev's original algorithm and our extensions of his algorithm.

\subsection{Efficiency in implementations}
\label{sect:practical-efficiency}
As stated in the introduction, the quantum circuit for Regev's factoring algorithm, and for our extensions of it, is asymptotically smaller than the corresponding circuits for Shor's algorithms and the various variations thereof that are in the literature --- hereinafter referred to as the ``existing algorithms''.

Even so, for concrete problem instances of limited size, the quantum circuit for Regev's algorithm, and for our extensions of it, may in practice be significantly larger than optimized circuits for the existing algorithms.
This when using the space-saving arithmetic of Ragavan and Vaikuntanathan~\cite{rv23}.
It all comes down to constants.

Furthermore, it is not clear that it is the circuit size, i.e.~the gate count, that is the best metric whereby to compare quantum algorithms.
Other relevant metrics include but are not limited to the circuit depth, space usage, or volume.
It also matters what kind of large-scale fault-tolerant quantum computer one envisages, and what its architectural constraints are, and so forth.

The limiting factor for when a quantum algorithm can first conceivably be run on a future large-scale fault-tolerant quantum computer is arguably the cost of the circuit as it determines the cost per run.
But if and when such computers become more commonly available, the overall cost across all runs required will also be important to take into account.

\vspace{2mm}
\subsubsection{What problem instances are of key interest?}
Arguably, the key reason for why there is a large interest in quantum algorithms for factoring integers and computing discrete logarithms is that such algo\-rithms may be used to break currently widely deployed asymmetric cryptography.

Hence, the concrete efficiency of the aforementioned quantum algorithms with respect to cryptographically relevant problem instances is potentially much more interesting than their asymptotic efficiency.
This is especially true when one considers the fact that asymmetric cryptography based on the integer factoring and discrete logarithm problems is being phased out.
Hence, we are most likely not going to see much larger instances of these problems being used in the future, and so our focus should be on~$n$ in the range from say 2048~bits up to~4096 bits.

\vspace{2mm}

\subsubsection{Notes on the constant~$C$}
In order to even begin to compare Regev's algorithm and our extensions of it to the existing algorithms, the constant~$C$ needs to be fixed since it directly affects the circuit size and depth.

Our simulations~\cite{eg23-simulations} show that selecting $C \approx 2$ is sufficient for $n = 2048$~bit moduli when $d = \ceil{\sqrt{n}}$ and $m = d + 4$ runs are performed.
The success probability after $m$ runs is then close to one.
Ragavan and Vaikuntanathan~\cite[App.~A]{rv23} show, based on a heuristic assumption, that selecting $C = 3 + \epsilon + o(1)$ is sufficient asymptotically as $n \rightarrow \infty$.
This when using the LLL~\cite{lll} lattice basis reduction algorithm.
By instead using a better reduction algorithm such as BKZ~\cite{bkz}, by increasing~$d$ and/or by increasing the number of runs~$m$, it is sufficient to use a somewhat smaller~$C$, as corroborated by our simulations.

It is also worth noting that performing $m \ge d + 4$ runs is not strictly necessary for our extended algorithms to work.
We have only inherited this bound from Regev's analysis.
With sufficiently large~$C$, it should be sufficient to perform significantly fewer runs, but using a larger~$C$ increases the circuit size and depth.
Therefore, it is probably preferable to use a smaller $C$, and to run the quantum algorithm $m \ge d$ times.

\subsubsection{Notes on the choice of arithmetic}
Our extensions of Regev's algorithm can be implemented both with Regev's original arithmetic~\cite{regev23} and with the space-saving arithmetic of Ragavan and Vaikuntanathan~\cite{rv23}.
In Theorem~\ref{thm:main}, and in Theorem \ref{thm:order}-\ref{thm:phi-N} in App.~\ref{sect:order-finding-factoring}, we express the cost in terms of using the space-saving arithmetic since it is asymptotically on par with Regev's original arithmetic\footnote{When ignoring constant factors.} in terms of the circuit size, but uses less space.

It is however not clear that the space-saving arithmetic is better than Regev's original arithmetic in practice for cryptographically relevant problem instances:

Regev's original arithmetic yields a smaller circuit size, and a lower circuit depth, when accounting for constants, at the expense of using more space.
If space is cheap, it may be quite competitive.
Note also that most of the additional space required is not computational space, but rather space where quantum information can be stored until it is eventually needed for the uncomputation.

\subsubsection{Notes on optimizations}
There is a plethora of optimizations for the existing algorithms already in the literature, many of which can be combined.

Not all of these optimizations carry over to Regev's algorithm, and less time has been spent on seeking to optimize Regev's algorithm, further complicating the task of comparing the algorithms.
The same holds true for the extensions of Regev's algorithm introduced in this work.

\subsection{Handling error correction failures}
\label{sect:analysis-robustness}

Quantum computers as currently envisaged are inherently noisy, necessitating the use of some form of quantum error correction to achieve a sufficient level of fault tolerance to run complex quantum algorithms.

The quantum error correction is parameterized so as to achieve a certain lower bound on the probability of all errors that arise during the run being corrected\footnote{Under a set of assumptions, e.g.~on how the errors arise and their correlation.}, and hence of the output from the run being good.
The higher the bound, the more costly the error correction.

For quantum algorithms that need only yield a single good output --- such as Shor's algorithms~\cite{shor94, ekera-success, ekera-completely}, and Ekerå--Håstad's variations~\cite{ekera-hastad, ekera-pp, ekera-short-success} thereof when not making tradeoffs --- it may be advantageous on average to select a lower bound in the error correction at the expense of potentially having to re-run the algorithm if the output is bad.
Indeed, this approach was used in recent cost estimates~\cite{gidney-ekera} for the aforementioned algorithms.

For Regev's algorithm~\cite{regev23}, and the extensions thereof introduced in this work, the situation is different however in that the post-processing is only guaranteed to succeed with probability at least $1/4$ if all of the $m \ge d + 4$ runs yield good outputs, where we recall that $d = \ceil{\sqrt{n} \,}$ so~$d$ grows fairly rapidly in~$n$.

If the post-processing was to fail as soon as the output from a single bad run is included in the set of vectors fed to it, then --- short of us being able to efficiently distinguish the vectors output by good runs from those output by bad runs --- we would need to parameterize the error correction so that we have a sufficiently high probability of being able to efficiently construct a set of $m$ vectors yielded by good runs.
In turn, this would drive up the cost of the error correction.
Possible options for constructing such a set include but are not limited to exhausting subsets, or somehow efficiently filtering out good runs from bad runs via a distinguisher.

This being said, it turns out that Regev's original post-processing is in fact relatively robust to errors.
It continues to work even if a relatively large fraction of the vectors input to it are sampled from a different distribution --- provided that it is still fed a sufficiently large number of vectors that are yielded by good runs of the quantum algorithm, and that the constant~$C$ is sufficiently large.

We evidence this by means of simulations~\cite{eg23-simulations}, in which we test the behavior of Regev's classical post-processing when a fraction of the vectors are yielded by bad runs, as simulated by sampling these vectors from the uniform distribution.

In this section, we further corroborate these simulations by providing an analysis of the post-processing in the setting where a fraction of the vectors fed to it are yielded by bad runs of the quantum algorithm.

\subsubsection{Analysis of the robustness of the post-processing}
Suppose that the post-processing algorithm is fed $m$ vectors $\vect{w}_1, \ldots, \vect{w}_m$ generated in $m$ runs of the quantum algorithm.
However, only $d+4 \le m_1 \le m$ of these vectors are from good runs, and hence guaranteed to be at most a distance $\delta$ from $\mathcal L^*/\mathbb Z^d$.
The remaining $m_2 = m-m_1$ vectors are from bad runs, and hence sampled from some other unknown distribution~$\mathcal F$.
Suppose furthermore that we can not directly distinguish between vectors from good and bad runs, respectively.

The goal of the post-processing algorithm is to recover a basis for $\mathcal L$.
In what follows, we show, based on some assumptions on $\mathcal L$ and $\mathcal F$, that the post-processing succeeds in recovering a basis for~$\mathcal L$.

However, even without these assumptions, it can be seen that at least some vectors from $\mathcal L$ can be recovered.
We therefore begin by detailing why a single non-zero vector is recoverable, so as to provide partial motivation for why we deem the assumptions about~$\mathcal L$ and~$\mathcal F$ to be reasonable, and because it may be useful in the context of the algorithm in Sect.~\ref{sect:dlog-pre-compute}.

\paragraph{Recovering a single vector from $\mathcal L$}

As in the analysis of the original post-processing in Lemma~\ref{lem:4.4}, we consider a lattice~$\mathcal L''$ generated by the rows of
\[\arraycolsep=7pt\def\arraystretch{2}
\left(
\begin{array}{c|c c c}
  \vect{I}_{d\times d} & S \cdot \vect{w}_1^{\mathrm T} & \cdots & S \cdot \vect{w}_m^{\mathrm{T}} \\
  \hline
  \vect{0}_{m \times d} & & S \cdot \vect{I}_{m\times m} & \\
\end{array}
\right)
\]
for $S = \delta^{-1}$ a scaling parameter.

An unknown subset of $m_2$ of the coordinates in this lattice are dependent on the distribution $\mathcal F$.
Meanwhile, the other $d+m_1$ coordinates form a lattice $\mathcal L'$ that is of exactly the same form as the lattices considered in Lemma~\ref{lem:4.4}.

By Lemma~\ref{lem:4.4}, the first $d$ coordinates of any non-zero vector in~$\mathcal L'$ shorter than $S \varepsilon/2$ is a vector in $\mathcal L$, where $\varepsilon = (4\det \mathcal L)^{-1/m_1}/3$.
Furthermore, for any vector in~$\mathcal L''$, if the first $d$ coordinates are zero, the remaining coordinates are either~$0$, or a multiple of~$S$ which is larger than $S\varepsilon/2$.
Therefore, the first $d$ coordinates of any non-zero vector in $\mathcal L''$ that is shorter than ${S\varepsilon/2}$ is guaranteed to be a non-zero vector in $\mathcal L$.

The lattice $\mathcal L''$ has determinant $S^m$.
By Minkowski's first theorem, we are guaranteed that it contains a vector of length at most $\sqrt{m+d} \cdot S^{m/(m+d)}$.

Thus, if
\[
  \sqrt{m+d} \cdot S^{m/(m+d)} < S \cdot \varepsilon/2,
\]
which is true if $S$ is sufficiently large, we are guaranteed that the first $d$ coordinates of the shortest non-zero vector in $\mathcal L''$ is a vector in $\mathcal L$.

Furthermore, for somewhat larger~$S$, the first $d$ coordinates of vectors that are significantly larger than the shortest non-zero vectors in $\mathcal L''$ are also guaranteed to be in~$\mathcal L$.
It follows that we can recover vectors in~$\mathcal L$ simply by using an efficient lattice reduction algorithm such as LLL~\cite{lll}.

Hence, even without making additional assumptions on~$\mathcal L$ or~$\mathcal F$, we can guarantee that at least some non-zero vector from $\mathcal L$ can be recovered efficiently.

This guarantee is almost sufficient for us to be able to solve the discrete logarithm problem with pre-processing: As in Sect.~\ref{sect:dlog-pre-compute}, consider the lattice
\[
  \mathcal L
  =
  \mathcal L_x
  =
  \Bigg\{
    (z_1, \ldots, z_d) \in \mathbb Z^d
    \: \Bigg| \:
    x^{z_d} \prod_{i \, = \, 1}^{d-1} g_i^{z_i} = 1,
  \Bigg\},
\]
where we are given $g$, $x$ and $e_1, \ldots, e_{d-1}$ such that $g_i = g^{e_i}$, and are to compute~$e$ such that $x = g^e$.
For $\vect{z} = (z_1, \ldots, z_d)$ a non-zero vector recovered from this lattice~$\mathcal L_x$, we have that
\[
  0 = ez_d + \sum_{i \, = \, 1}^{d-1}e_iz_i \:\: (\text{mod } r)
\]
for $r$ the order of $g$.
It is then sufficient that~$z_d$ does not share any large factors with~$r$ for the discrete logarithm~$e$ to be recoverable.
To suppose that this is the case with noticeable probability for the recovered vector is indeed a very weak assumption, yet this assumption is sufficient to recover the discrete logarithm with pre-computation even if some runs are bad.

To show that the post-processing robustly succeeds in recovering a basis for~$\mathcal L$ we do, however, require a stronger assumption on $\mathcal L$ and $\mathcal F$.

\paragraph{Recovering a basis for~$\mathcal L$}
As previously mentioned, to show that we are able to recover a basis for~$\mathcal L$, we need to make some assumptions on~$\mathcal L$ and~$\mathcal F$.

Note that even if all~$m$ runs are good, we cannot guarantee that the post-processing succeeds in recovering a basis for~$\mathcal L$ without making some additional assumption on~$\mathcal L$.
In particular, Theorem~\ref{thm:main} requires a specific lattice to have a short basis, and therefore needs to use Assumption~\ref{assump:basis-generic}.
For the analysis in this section, we make use of a similar assumption so as to ensure that the short vectors in~$\mathcal L''$ yield vectors that generate~$\mathcal L$, which in turn ensures that these vectors can be used to recover a basis for~$\mathcal L$.

To this end, we first note that if the first~$d$ coordinates of a vector $\vect{u}'' \in \mathcal L''$ form a vector $\vect{u} \in \mathcal L$, the~$m_1$ coordinates of $\vect{u}''$ that correspond to good runs have absolute value at most $\norm{\vect{u}}S\delta = \norm{\vect{u}}$.
The remaining~$m_2$ coordinates are congruent to $S\vect{u} \cdot \vect{w}_i$ modulo $S$, where $\vect{w}_i$ is sampled from~$\mathcal F$.
We do not have much control over the values of these $m_2$ coordinates, but we can expect them to be small for at least some short vectors $\vect{u}$.

To analyze the vectors in $\mathcal L''$ that are such that all coordinates are small, and such that the first~$d$ coordinates form a vector in~$\mathcal L$, we consider the sublattice of~$\mathcal L''$ where the first~$d$ coordinates are exactly the vectors in $\mathcal L$.
This sublattice is thus generated by the rows of
\[\arraycolsep=7pt\def\arraystretch{2}
  \left(
  \begin{array}{c|c c c}
    \vect{B} & S \vect{B}\cdot \vect{w}_1^{\mathrm T} & \cdots & S\vect{B} \cdot \vect{w}_m^{\mathrm{T}} \\
    \hline
    \vect{0}_{m \times d} & & S \cdot \vect{I}_{m\times m} & \\
  \end{array}
  \right)
\]
where $\vect{B}$ is a basis for $\mathcal L$.
For a small vector $\vect{u} \in \mathcal L$, we have already seen that the size of the coordinates that correspond to good runs is limited.
We therefore consider a related lattice $\Lambda$ that do not contain the $m_1$ coordinates that correspond to good runs.
With $\vect{F}$ a $d\times m_2$-dimensional matrix such that the columns are equal to the $m_2$ vectors $\vec w_i$ that correspond to bad runs, we can see that this lattice $\Lambda$ is generated by the rows of
\begin{align}\label{eq:lambda-form}
  \arraycolsep=7pt\def\arraystretch{2}
  \left(
  \begin{array}{c|c}
    \vect{B} & S\vect{B} \cdot \vect{F} \\
    \hline
    \vect{0}_{m_2\times d} & S \cdot \vect{I}_{m_2} \\
  \end{array}
  \right).
\end{align}

The first~$d$ coordinates of a short vector $\vect{x} \in \Lambda$ is a short vector $\vect{u} \in \mathcal L$.
As such, $\vect{x}$ directly corresponds to a short vector in $\mathcal L''$, where the remaining $m_1$ coordinates not included in $\Lambda$ have absolute value at most $\norm{\vect{u}} \le \norm{\vect{x}}$.
Thus, the full vector in~$\mathcal L''$ is not longer than ${\norm{\vect{x}}(1 + m_1)}$.
As such, the vectors in~$\Lambda$ of length at most~$T$ correspond to vectors in~$\mathcal L''$ of length at most $T(1+ m_1)$.

By using Claim~\ref{cla:5.1}, we recover a set $\mathcal S$ of vectors in $\mathcal L''$ that generate all vectors in $\mathcal L''$ of length at most $T(1+ m_1)$.
The relevant coordinates of these recovered vectors thus generate all vectors in $\Lambda$ of length at most $T$.

Next, we show that the relevant coordinates of the vectors in~$\mathcal S$ only generate vectors in $\Lambda$.
Each vector in $\mathcal S$ has length at most $2^{(d+m)/2} \cdot \sqrt{d+m} \cdot T(1+m_1)$.
Meanwhile, as previously noted, we are guaranteed that any vector in~$\mathcal L''$ shorter than~$S \varepsilon/2$ corresponds to a vector in $\mathcal L$, and therefore to a vector in $\Lambda$.

Thus, if
\begin{align} \label{eq:required-inequality}
    2^{(d+m)/2} \cdot \sqrt{d+m} \cdot T(1+m_1) < S \varepsilon/2
\end{align}
the relevant coordinates of the vectors in $\mathcal S$ generate a sublattice of $\Lambda$, and we can thus recover a basis for this sublattice.

In summary, if~(\ref{eq:required-inequality}) holds, we can recover vectors in $\mathcal L''$ that are such that the relevant coordinates generate a sublattice of $\Lambda$.
Any vector $\vect{x} \in \Lambda$ such that ${\norm{\vect{x}} \le T}$ is guaranteed to be in this sublattice.
Furthermore, as the first $d$ coordinates of any vector in $\Lambda$ is a vector in $\mathcal L$, we can easily recover a related sublattice of $\mathcal L$ from the recovered vectors in $\mathcal L''$.
Thus, in order for this process to recover a basis for $\mathcal L$, it is sufficient that the vectors in $\Lambda$ of length at most $T$ are guaranteed to contain sufficient information about $\mathcal L$.

\paragraph{Required properties of $\Lambda$}

Proving that the vectors in $\Lambda$ shorter than $T$ are sufficient to recover a basis for $\mathcal L$ seems hard, and would presumably require significant assumptions to be made regarding both $\mathcal L$ and $\mathcal F$.

We therefore instead choose to simply assume that this is the case for lattices~$\Lambda$ of this form with some noticeable probability~$p$ over the randomness in~$\mathcal F$.
This is equivalent to assuming that~$\Lambda$ has a basis where all information about~$\mathcal L$ is contained in short basis vectors.

To determine for which bound $T$ it is reasonable to assume that such a basis exists, we note that $\Lambda$ has determinant $S^{m_2}\det(\mathcal L)$.
Therefore, by Minkowski's first theorem we are guaranteed that $\Lambda$ contains a vector of length at most
\[
  {\sqrt{m_2 + d} \cdot S^{m_2/(m_2+d)}\det(\mathcal L)^{1/(m_2+d)}}.
\]

A random lattice is expected to have a basis where each basis vector is only slightly longer than the shortest vector in the lattice.
This leads us to make the following concrete assumption:
\begin{assumption}\label{assump:error-basis}
    Let $\vect{B}$ be a basis for $\mathcal L$, $F > 0$ be a constant and $\vect{F}$ be a $d\times m_2$-dimensional matrix where each of the $m_2$ columns are sampled from $\mathcal F$.
    Then, with probability $p$ the lattice $\Lambda$ generated by the rows of
    \begin{align*}
        \arraycolsep=7pt\def\arraystretch{2}
    \left(
    \begin{array}{c|c}
        \vect{B} & S\vect{B} \cdot \vect{F} \\
        \hline
        \vect{0}_{m_2\times d} & S \cdot \vect{I}_{m_2} \\
    \end{array}
    \right)
    \end{align*}
    has a basis such that each basis vector with non-zero values in its first $d$ coordinates is shorter than $T = F\sqrt{m_2 + d} \cdot S^{m_2/(m_2+d)} \det(\mathcal L)^{1/(m_2+d)}$.
\end{assumption}

To further motivate this assumption, we note that if~$\mathcal L$ is a nice lattice, we would expect it to contain many short vectors, and we would expect these short vectors to span the full lattice~$\mathcal L$.
We furthermore expect many of these short vectors in~$\mathcal L$ to have an inner product far from an integer with the columns of~$\vect{F}$ that are sampled from~$\mathcal F$.
These short vectors in~$\mathcal L$ therefore do not correspond to short vectors in~$\Lambda$.
A small fraction of the vectors may however have an inner product close to an integer with each of the columns of~$\vect F$.

A potential risk is hence that the distribution~$\mathcal F$ may somehow be biased so as to cause some specific part of~$\mathcal L$ to be more likely to have an inner product far from an integer with vectors sampled from~$\mathcal F$, and therefore to be more likely to correspond to large vectors in~$\Lambda$ leading to only a part of the lattice~$\mathcal L$ being recovered.
If no such bias exists, it is, however, natural to assume that the short vectors in~$\mathcal L$ that have an inner product close to an integer with each of the columns of~$\vect F$ still span the full lattice~$\mathcal L$.

It is hence plausible that Assumption~\ref{assump:error-basis} holds with probability close to~$1$ if the lattice $\mathcal L$ and the distribution $\mathcal F$ are nice.
In particular, if there exists a short basis for~$\mathcal L$, and if~$\mathcal F$ is the uniform distribution, we expect Assumption~\ref{assump:error-basis} to hold with probability~$p$ close to~$1$, and this is also corroborated by our simulations.

Theorem~\ref{thm:post-process-robust} below follows from this assumption and the above analysis.

\begin{theorem}[Robust post-processing derived from Theorem~1.1~in~\cite{regev23}]
  \label{thm:post-process-robust}
  Let $\mathcal L$ be a $d$-dimensional lattice with $\det \mathcal L < 2^{d^2}$, and let $m, m_1$ and $m_2$, and the vectors $\vect{w}_1, \ldots, \vect{w}_m$, be as above.
  Furthermore, let $\delta = 2^{-Cd}\cdot \sqrt{d}/2$ and $S = \delta^{-1}$ for constant
  \begin{align*}
      C > \left(\dfrac{5}{2} + \dfrac{m}{2d}\right)\left(1  + \dfrac{m_2}{d}\right) + o(1).
  \end{align*}

  Then, under Assumption~\ref{assump:error-basis}, there is an efficient classical algorithm that, with probability at least $p / 4$ over the choice of the~$\vect{w}_i$, recovers a basis for~$\mathcal L$.
\end{theorem}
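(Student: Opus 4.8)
The plan is to read the statement off the analysis carried out in the paragraphs immediately preceding it; the only genuinely new ingredient is the arithmetic check that the displayed lower bound on $C$ is enough to force the pivotal inequality~(\ref{eq:required-inequality}) to hold. Concretely, I would first recall the set-up: form the lattice $\mathcal L''$ from the rows of the matrix built from $\vect w_1,\ldots,\vect w_m$ with scaling $S=\delta^{-1}$, and let $\Lambda$ be the associated lattice of the shape~(\ref{eq:lambda-form}) coming from a basis $\vect B$ of $\mathcal L$ and the matrix $\vect F$ whose $m_2$ columns are the bad-run vectors. Applying Lemma~\ref{lem:4.4} to the $d+m_1$ coordinates that come from good runs --- i.e.\ with ``$m$'' of that lemma equal to $m_1$ and $\varepsilon=(4\det\mathcal L)^{-1/m_1}/3$ --- shows that, with probability at least $1/4$ over the choice of the good-run cosets $\vect v_i$, every non-zero vector of $\mathcal L''$ of norm below $S\varepsilon/2$ has its first $d$ coordinates equal to a non-zero vector of $\mathcal L$ (the case of a vanishing $d$-coordinate part is excluded because then the remaining coordinates would be nonzero multiples of $S>S\varepsilon/2$).

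Next I would invoke Claim~\ref{cla:5.1} on $\mathcal L''$ with norm bound $T(1+m_1)$, where $T$ is the bound supplied by Assumption~\ref{assump:error-basis}; this returns at most $d+m$ vectors of norm at most $2^{(d+m)/2}\sqrt{d+m}\,T(1+m_1)$ that generate every vector of $\mathcal L''$ of norm $\le T(1+m_1)$, and the latter set contains the $\mathcal L''$-lift of every vector of $\Lambda$ of norm at most $T$. The crux is then to verify~(\ref{eq:required-inequality}), namely $2^{(d+m)/2}\sqrt{d+m}\,T(1+m_1)<S\varepsilon/2$: once it holds, each recovered vector lies below $S\varepsilon/2$, hence has its first $d$ coordinates in $\mathcal L$, so its projection onto the first $d$ and the $m_2$ bad coordinates lands in $\Lambda$, and the recovered vectors generate a sublattice $\Lambda'$ with $\Lambda_{\le T}\subseteq\Lambda'\subseteq\Lambda$. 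Finally, Assumption~\ref{assump:error-basis} --- which holds with probability $p$ over $\mathcal F$, independently of the good-run cosets --- gives $\Lambda$ a basis whose vectors with non-zero $d$-coordinate part are shorter than $T$; since the projection onto the first $d$ coordinates maps $\Lambda$ onto $\mathcal L$, those short basis vectors project onto a full-rank generating set of $\mathcal L$, which is therefore contained in the projection of $\Lambda'$, so a basis of $\mathcal L$ can be extracted by Hermite normal form. Multiplying the two independent probabilities yields $p/4$.

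It then only remains to check~(\ref{eq:required-inequality}), which is the one real computation. I would take $\log_2$ of both sides, substitute $T=F\sqrt{m_2+d}\,S^{m_2/(m_2+d)}\det(\mathcal L)^{1/(m_2+d)}$, $\varepsilon=(4\det\mathcal L)^{-1/m_1}/3$ and $S=2^{Cd}\cdot 2/\sqrt{d}$, and use $\det\mathcal L<2^{d^2}$, $m=O(d)$ and $m_1\ge d+4$. Moving the $\frac{m_2}{m_2+d}\log_2 S$ contribution of $\log_2 T$ to the right-hand side, multiplying through by $\frac{m_2+d}{d}$, and absorbing every $O(\log d)$ and $O(1)$ term into the $o(1)$ slack in $C$, the requirement collapses to
\[
  C \;>\; \frac{(m_2+d)(d+m)}{2d^2} + 2 + \frac{m_2}{d} + o(1) \;=\; \frac{5}{2} + \frac{m}{2d} + \frac{3m_2}{2d} + \frac{m_2 m}{2d^2} + o(1),
\]
and since the right-hand side differs from $\bigl(\frac{5}{2}+\frac{m}{2d}\bigr)\bigl(1+\frac{m_2}{d}\bigr)$ by exactly $-\frac{m_2}{d}\le 0$, the hypothesis on $C$ in the statement is sufficient.

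The main obstacle is purely bookkeeping rather than any single hard estimate: keeping straight which coordinates belong to good versus bad runs, feeding Lemma~\ref{lem:4.4} and Claim~\ref{cla:5.1} the correct dimension and norm parameters ($m_1$ rather than $m$ inside $\varepsilon$, and $T(1+m_1)$ rather than $T$ as the norm bound), and making sure that the randomness of the cosets $\vect v_i$ and that of $\mathcal F$ really are independent so that the factors $1/4$ and $p$ multiply. Everything substantive about \emph{why} the short vectors of $\Lambda$ carry enough information to reconstruct a basis of $\mathcal L$ is exactly what Assumption~\ref{assump:error-basis} is introduced to supply, so no further argument is needed there.
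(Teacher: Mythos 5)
Your proposal is correct and follows essentially the same route as the paper: the preceding analysis (Lemma~\ref{lem:4.4} applied with $m_1$ in place of $m$, Claim~\ref{cla:5.1} with norm bound $T(1+m_1)$, and Assumption~\ref{assump:error-basis} supplying the short generating vectors of $\Lambda$), combined with taking logarithms in~(\ref{eq:required-inequality}) and using $\det\mathcal L<2^{d^2}$ and $m_1\ge d$. Your arithmetic applies the crude determinant bounds after multiplying through by $\tfrac{m_2+d}{d^2}$ rather than by $\tfrac{1}{d}$, which yields a sufficient condition on $C$ that is smaller than the stated one by exactly $\tfrac{m_2}{d}$, but as you note the stated hypothesis still implies it, so the conclusion stands.
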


\begin{proof}
  The theorem follows from the above analysis, where the success probability bound of $p/4$ is obtained by combining the probability~$p$ from Assumption~\ref{assump:error-basis} with the bound of~$1/4$ from Lemma~\ref{lem:4.4}.
  Furthermore, the bound on~$C$ is obtained by inserting the values of~$\varepsilon$ and~$T$ in~(\ref{eq:required-inequality}) and taking logarithms, leading to the requirement that
  \begin{align*}
    (d+m)/2 + \dfrac{Cdm_2 + \log(\det\mathcal L)}{m_2+d} + o(d) < Cd - \dfrac{\log(\det\mathcal L)}{m_1}.
  \end{align*}

  By using that $\det(\mathcal L) < 2^{d^2}$ and $m_1 \ge d$, this may be simplified to
  \begin{align*}
    \dfrac{5}{2} + \dfrac{m}{2d} + \dfrac{Cm_2}{m_2+d} + o(1) < C,
  \end{align*}
  leading to the requirement
  \begin{align*}
    C > \left(\dfrac{5}{2} + \dfrac{m}{2d}\right)\left(1  + \dfrac{m_2}{d}\right) + o(1)
  \end{align*}
  and so the theorem follows.
\end{proof}

\section*{Acknowledgments}
\anonymize{We are grateful to Johan Håstad, Oded Regev, and the participants of the Quantum Cryptanalysis seminar at Schloss Dagstuhl, for useful comments.
Martin Ekerå thanks Schloss Dagstuhl and the organizers of the seminar for creating an environment where scientific progress is facilitated.
Funding and support for this work was provided by the Swedish NCSA that is a part of the Swedish Armed Forces.}

\appendix
\section{Order finding and factoring}
\label{sect:order-finding-factoring}
Let~$\mathbb G$ be a finite Abelian group and let $g \in \mathbb G$.
Suppose that our goal is to find the order~$r$ of~$g$.
Let $g_1, \ldots, g_{d-1}$ be $d-1$ small elements in~$\mathbb G$.

In particular, for $\mathbb G = \mathbb Z_N^*$, for~$N$ a positive $n$-bit integer that is coprime to the first~$d = \ceil{\sqrt{n} \,}$ primes, we take $g_1, \ldots, g_{d-1}$ to be the first $d-1$ primes that when perceived as elements of~$\mathbb Z_N^*$ are distinct from~$g$.

By Lemma~\ref{lem:quantum-algorithm-Zn}, there exists an efficient quantum algorithm that outputs a vector $\vect{w} \in [0, 1)^d$ that, except for with probability $1 / \poly(d)$, is within distance $\sqrt{d/2} \cdot 2^{-C\sqrt{n}}$ of a uniformly chosen $\vect{v} \in \mathcal L_{g}^*/\mathbb Z^d$, where
\begin{align*}
  \mathcal L_{g}
  =
  \Bigg\{
  (z_1, \ldots, z_d) \in \mathbb Z^{d}
    \: \Bigg| \:
    g^{z_{d}} \prod_{i \, = \, 1}^{d-1} g_i^{z_i} = 1
  \Bigg\},
\end{align*}
for $C > 0$ some constant.
By performing $m \ge d + 4$ runs of this quantum algorithm, we obtain~$m$ such vectors $\vect{w}_1, \ldots, \vect{w}_m$.

By Lemma~\ref{lem:post-process}, there is an efficient classical algorithm that, with probability at least $1/4$, recovers a basis for a sublattice~$\Lambda$ of~$\mathcal L_g$ given the vectors $\vect{w}_1, \ldots, \vect{w}_m$.
This sublattice~$\Lambda$ contains all vectors in~$\mathcal L_g$ of norm at most~$T = \exp(\ordo(\sqrt n))$.

Under Assumption~\ref{assump:basis-generic}, there exist a basis of~$\mathcal L_g$ with all basis vector having norm at most $T$.
It then follows that~$\Lambda = \mathcal L_{g}$, and Lemma~\ref{lem:post-process} thus recovers a basis~$\vect{B}$ for $\mathcal L_{g}$.
Furthermore, $\vect {v}_r = (0, \ldots, 0, r)$ is in~$\mathcal L_{g}$, and by definition~$r$ is the least positive integer such that $g^r = 1$.
Therefore, $\vect {v}_r$ is the shortest vector in~$\mathcal L_{g}$ that is non-zero only in the last coordinate.
Since we know a basis~$\vect B$ for~$\mathcal L_{g}$, we can efficiently recover~$\vect{v}_r$ and by extension~$r$.

We have heuristically verified that the above procedure yields a basis of~$\mathcal L_{g}$, and that we can solve for the order~$r$, by means of simulations.
For $n = 2048$~bits, the simulations indicate that it suffices to take $C \approx 2$, and that the success probability is close to one after $d + 4$ runs of the quantum algorithm.

We are now ready to summarize the above analysis in a theorem:
\begin{theorem}
  \label{thm:order}
  Let~$N$ be a positive $n$-bit integer and let $d = \ceil{\sqrt{n} \,}$.
  Let $g \in \mathbb Z_N^*$.
  As in~\cite[Theorem~1]{rv23}, let~$G$ be the gate cost of a quantum circuit that takes
  \begin{align*}
    \ket{a, b, t, 0^S} \rightarrow \ket{a, b, (t + ab) \text{ mod } N, 0^S}
  \end{align*}
  for $a, b, t \in [0, N) \cap \mathbb Z$ and~$S$ the number of ancilla qubits required.
  Then, under Assumption~\ref{assump:basis-generic}, there is an efficient classical algorithm that, by calling a quantum circuit $d + 4$ times, yields the order~$r$ of~$g$ with probability at least~$1/4$.
  This quantum circuit has gate cost $\ordo(n^{1/2} \, G + n^{3/2})$, and it requires
  \begin{align*}
    S + \left( \frac{C}{\log \phi} + 8 + o(1) \right) n
  \end{align*}
  qubits of space, for some constant $C > 0$ and~$\phi$ the golden ratio.
\end{theorem}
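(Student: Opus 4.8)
The plan is to obtain the theorem by combining the quantum subroutine of Lemma~\ref{lem:quantum-algorithm-Zn}, the classical post-processing of Lemma~\ref{lem:post-process}, the determinant bound of Lemma~\ref{lem:determinant}, and the short-basis guarantee of Assumption~\ref{assump:basis-generic} (via Lemma~\ref{lem:basis-dlog}), exactly along the lines of the discussion of $\mathcal L_g$ that precedes the statement. First I would invoke Lemma~\ref{lem:quantum-algorithm-Zn} with $k = 1$, taking the single arbitrary element to be $g$ and letting $g_1, \ldots, g_{d-1}$ be the first $d-1$ primes that are distinct from $g$ as elements of $\mathbb Z_N^*$; this is permitted since $N$ is coprime to the first $d$ primes. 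The lemma then directly delivers the quantum circuit together with the stated gate cost $\ordo(n^{1/2}\,G + n^{3/2})$ and space cost $S + (C/\log\phi + 8 + o(1))\,n$, and it guarantees that a run outputs a vector $\vect{w} \in [0,1)^d$ that, except with probability $1/\poly(d)$, lies within distance $\sqrt{d/2}\cdot 2^{-C\sqrt{n}}$ of a uniformly random coset of $\mathcal L_g^*/\mathbb Z^d$. Performing $d+4$ runs yields $\vect{w}_1, \ldots, \vect{w}_{d+4}$, all good except with probability $1/\poly(d)$ by a union bound.

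Next I would apply Lemma~\ref{lem:post-process} with $m = d+4 = \ordo(d)$, norm bound $T = \exp(\ordo(\sqrt{n})) = \exp(\ordo(d))$, and $\delta = \sqrt{d/2}\cdot 2^{-C\sqrt{n}} = \exp(-\ordo(d))$, which is as small as that lemma requires once the constant $C$ is chosen large enough. Its determinant hypothesis is met because, by Lemma~\ref{lem:determinant}, $\det \mathcal L_g$ equals the order of the subgroup of $\mathbb Z_N^*$ generated by $g, g_1, \ldots, g_{d-1}$, hence is at most $|\mathbb Z_N^*| < 2^n \le 2^{d^2}$ for $d = \ceil{\sqrt{n}\,}$. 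Lemma~\ref{lem:post-process} then returns, with probability at least $1/4$ over the choice of the cosets, an efficiently computable basis for a sublattice $\Lambda \subseteq \mathcal L_g$ containing all vectors of $\mathcal L_g$ of norm at most $T$. Under Assumption~\ref{assump:basis-generic}, the construction of Lemma~\ref{lem:basis-dlog} (with $g$ as the one arbitrary element) shows that $\mathcal L_g$ itself has a basis of vectors of norm $\exp(\ordo(n/d)) = \exp(\ordo(\sqrt{n})) = T$, so in fact $\Lambda = \mathcal L_g$ and the recovered basis $\vect{B}$ is a basis of $\mathcal L_g$.

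It then remains to extract $r$. The sublattice of $\mathcal L_g$ consisting of vectors whose first $d-1$ coordinates vanish is, since $g^{z_d} = 1$ precisely when $r \mid z_d$, the rank-one lattice generated by $\vect{v}_r = (0,\ldots,0,r)$; this sublattice, and hence $r$, is read off from $\vect{B}$ by elementary linear algebra, e.g.\ a Hermite normal form computation. Combining the probability at least $1/4$ from the post-processing with the fact that all $d+4$ runs are good except with probability $1/\poly(d)$ gives the claimed success probability, and the circuit costs are precisely those of Lemma~\ref{lem:quantum-algorithm-Zn}. The only genuinely non-trivial, non-provable step is the equality $\Lambda = \mathcal L_g$, which hinges on $\mathcal L_g$ admitting a basis of norm $\exp(\ordo(\sqrt{n}))$; this is exactly what Assumption~\ref{assump:basis-generic} supplies, so every other step is an unconditional consequence of the lemmas already established, and the proof is essentially just a matter of checking that their hypotheses line up.
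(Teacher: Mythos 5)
Your proposal follows essentially the same route as the paper: invoke Lemma~\ref{lem:quantum-algorithm-Zn} with $k=1$ for the lattice $\mathcal L_g$, post-process via Lemma~\ref{lem:post-process} (with the determinant controlled by Lemma~\ref{lem:determinant}), use Assumption~\ref{assump:basis-generic} to conclude $\Lambda = \mathcal L_g$, and read off $r$ from the rank-one sublattice supported on the last coordinate. The one discrepancy is your assertion that ``$N$ is coprime to the first $d$ primes'': the theorem does not hypothesize this, and the paper's proof closes the gap by first factoring out any of the first $d$ primes dividing $N$, performing order finding classically with respect to those prime powers, and combining the results with the quantum order finding in the coprime part --- a reduction you should add, since both Lemma~\ref{lem:quantum-algorithm-Zn} and Assumption~\ref{assump:basis-generic} require the coprimality.
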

\begin{proof}
  The proof follows from the above analysis, and from Lemma~\ref{lem:quantum-algorithm-Zn} where the gate and space costs of the quantum circuit are analyzed.

  Note that if~$N$ is divisible by one or more of the first~$d$ primes, then these prime powers may be factored out before quantum order finding is performed.

  More specifically, order finding may be performed efficiently classically with respect to these prime powers, after which the partial results may be efficiently combined classically to yield the order~$r$ of~$g$.
\end{proof}

\subsection{Factoring~$N$ via order finding in~$\mathbb Z_N^*$}
\label{sect:factor-via-r}
Let~$N$ be a positive composite integer.
Suppose that we pick~$g$ uniformly at random from $\mathbb Z_N^*$ and compute the order~$r$ of~$g$ by using the algorithm in App.~\ref{sect:order-finding-factoring}.

Then, with very high probability --- that is lower-bounded and shown to tend to one asymptotically in~\cite{ekera-completely, ekera-success} --- we can completely factor~$N$ given~$r$ via the procedure in~\cite{ekera-completely}.
This provides an alternative to Regev's factoring algorithm, that yields the complete factorization of~$N$ in $d + 4$ runs by factoring via order finding, at the expense of making a stronger heuristic assumption, and at the expense of including one element that is not small in the product that is computed quantumly.

\subsection{Finding the order of an Abelian group}
Let~$\mathbb G$ be a finite Abelian group.
Suppose that our goal is to find the order~$\# \mathbb G$ of~$\mathbb G$.
Let $g_1, \ldots, g_{d}$ be $d$ small elements in~$\mathbb G$.
In particular, for $\mathbb G = \mathbb Z_N^*$, for~$N$ a positive $n$-bit integer that is coprime to the first $d = \ceil{\sqrt{n} \,}$ primes, we take $g_1, \ldots, g_{d} \in \mathbb Z_N^*$ to be the first~$d$ primes perceived as elements of~$\mathbb Z_N^*$.

By Lemma~\ref{lem:quantum-algorithm-Zn}, there exists an efficient quantum algorithm that outputs a vector $\vect{w} \in [0, 1)^d$ that, except for with probability $1 / \poly(d)$, is within distance ${\sqrt{d/2} \cdot 2^{-C\sqrt{n}}}$ of a uniformly chosen $\vect{v} \in \mathcal L^*/\mathbb Z^d$, where
\begin{align*}
  \mathcal L
  =
  \Bigg\{
    (z_1, \ldots, z_d) \in \mathbb Z^{d}
    \: \Bigg| \:
    \prod_{i \, = \, 1}^{d} g_i^{z_i} = 1
  \Bigg\},
\end{align*}
for $C > 0$ some constant.
By performing $m \ge d + 4$ runs of this quantum algorithm, we obtain~$m$ such vectors $\vect{w}_1, \ldots, \vect{w}_m$.
By Lemma~\ref{lem:post-process}, there is an efficient classical algorithm that, with probability at least $1/4$, recovers a basis~$\vect{B}$ of a sublattice~$\Lambda$ of~$\mathcal L_g$ when given the vectors $\vect{w}_1, \ldots, \vect{w}_m$.
Furthermore, the sublattice~$\Lambda$ of~$\mathcal L$ contains all vectors in~$\mathcal L$ of norm at most~$T = \exp(\ordo(\sqrt{n}))$.

Under Assumption~\ref{assump:basis-generic} there exist a basis for~$\mathcal L$ with all basis vectors having norm at most $T$.
It then follows that $\Lambda = \mathcal L$, and~$\vect{B}$ is thus a basis for~$\mathcal L$.
Given~$\vect B$, we can efficiently compute $\det \mathcal L = | \det \vect{B} \,|$.
By Lemma~\ref{lem:determinant} and Assumption~\ref{assump:basis-generic}, we then have that $\# \mathbb Z_N^* = \varphi(N) = \det \mathcal L$, where~$\varphi$ is Euler's totient function.

We have heuristically verified that the above procedure yields a basis of~$\mathcal L$, and that we can solve for $\varphi(N)$, by means of simulations.
For $n = 2048$ bits, the simulations indicate that it suffices to take $C \approx 2$, and that the success probability is close to one after $d + 4$ runs of the quantum algorithm.

We are now ready to summarize the above analysis in a theorem:
\begin{theorem}
  \label{thm:phi-N}
  Let~$N$ be a positive $n$-bit integer coprime to the first $d = \ceil{\sqrt{n} \,}$ primes, and let $g_1, \ldots, g_{d} \in \mathbb Z_N^*$ be the first~$d$ primes perceived as elements of~$\mathbb Z_N^*$.
  As in~\cite[Theorem~1]{rv23}, let~$G$ be the gate cost of a quantum circuit that takes
  \begin{align*}
    \ket{a, b, t, 0^S} \rightarrow \ket{a, b, (t + ab) \text{ mod } N, 0^S}
  \end{align*}
  for $a, b, t \in [0, N) \cap \mathbb Z$ and~$S$ the number of ancilla qubits required.
  Then, under Assumption~\ref{assump:basis-generic}, there is an efficient classical algorithm that, by calling a quantum circuit $d + 4$ times, yields $\varphi(N)$ with probability at least~$1/4$.
  This quantum circuit has gate cost $\ordo(n^{1/2} \, G + n^{3/2})$, and it requires
  \begin{align*}
    S + \left( \frac{C}{\log \phi} + 8 + o(1) \right) n
  \end{align*}
  qubits of space, for some constant $C > 0$ and~$\phi$ the golden ratio.
\end{theorem}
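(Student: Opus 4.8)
The plan is to reassemble, for this particular choice of lattice, exactly the pipeline used in the proofs of Theorem~\ref{thm:main} and Theorem~\ref{thm:order}: sample vectors near cosets of $\mathcal{L}^*/\mathbb{Z}^d$ with the quantum algorithm, recover a basis of $\mathcal{L}$ classically, and then read off $\varphi(N)$ from the lattice determinant via Lemma~\ref{lem:determinant}. First I would apply Lemma~\ref{lem:quantum-algorithm-Zn} with $k = 0$, so that the product computed quantumly consists solely of the $d$ small primes $g_1, \ldots, g_d$; this is legitimate because $N$ is coprime to the first $d$ primes, hence each $g_i$ lies in $\mathbb{Z}_N^*$. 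With $k=0$ the whole product benefits from the fast arithmetic, so the lemma supplies a quantum circuit of gate cost $\ordo(n^{1/2}\,G + n^{3/2})$ on $S + (C/\log\phi + 8 + o(1))\,n$ qubits that, except with probability $1/\poly(d)$, outputs $\vect{w} \in [0,1)^d$ within distance $\sqrt{d/2}\cdot 2^{-C\sqrt{n}}$ of a uniformly chosen coset of $\mathcal{L}^*/\mathbb{Z}^d$. Running it $m = d+4$ times yields $\vect{w}_1, \ldots, \vect{w}_m$.

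Next I would verify the hypotheses of Lemma~\ref{lem:post-process} and apply it. We have $m = \ordo(d)$ with $m \ge d+4$, and $\delta = \sqrt{d/2}\cdot 2^{-C\sqrt{n}} = \exp(-\ordo(d))$, which is as small as required once $C$ is chosen large enough. The determinant bound $\det\mathcal{L} < 2^{d^2}$ follows from Lemma~\ref{lem:determinant}: $\det\mathcal{L}$ is the order of the subgroup of $\mathbb{Z}_N^*$ generated by $g_1, \ldots, g_d$, which divides $\varphi(N) < N < 2^n \le 2^{d^2}$ since $d = \ceil{\sqrt{n}\,}$. Lemma~\ref{lem:post-process} then returns, with probability at least $1/4$ over the coset choices, a basis $\vect{B}$ of a sublattice $\Lambda \subseteq \mathcal{L}$ containing every vector of $\mathcal{L}$ of norm at most $T = \exp(\ordo(\sqrt{n}))$. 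Invoking Assumption~\ref{assump:basis-generic}, $\mathcal{L}$ admits a basis all of whose vectors have norm at most $\exp(Kn/d) = \exp(\ordo(\sqrt{n})) \le T$, so $\Lambda$ already contains a full basis of $\mathcal{L}$ and hence $\Lambda = \mathcal{L}$; thus $\vect{B}$ is a basis for $\mathcal{L}$ itself.

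Finally I would compute $\det\mathcal{L} = \abs{\det\vect{B}}$ in polynomial time and conclude via Lemma~\ref{lem:determinant} that it equals the order of the subgroup generated by $g_1, \ldots, g_d$; the first clause of Assumption~\ref{assump:basis-generic} identifies this subgroup with all of $\mathbb{Z}_N^*$, so $\det\mathcal{L} = \varphi(N)$. The circuit-size and qubit-count claims are then quoted directly from Lemma~\ref{lem:quantum-algorithm-Zn}. The step that carries all the weight — and the one I regard as the real obstacle, in that it is assumed rather than proved — is the double use of Assumption~\ref{assump:basis-generic}: it is needed both to force $\Lambda = \mathcal{L}$ (otherwise the post-processing might return only a proper sublattice, whose determinant is a proper multiple of $\varphi(N)$) and to force $\langle g_1,\ldots,g_d\rangle = \mathbb{Z}_N^*$ (otherwise $\det\mathcal{L}$ would be the size of a proper subgroup, giving only a divisor of $\varphi(N)$). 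Everything else is a routine reassembly of the preceding lemmas, mirroring the proofs of Theorems~\ref{thm:main} and~\ref{thm:order}.
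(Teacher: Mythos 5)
Your proposal is correct and follows essentially the same route as the paper: invoke Lemma~\ref{lem:quantum-algorithm-Zn} with $k=0$, post-process via Lemma~\ref{lem:post-process}, use Assumption~\ref{assump:basis-generic} to conclude $\Lambda = \mathcal L$, and read off $\varphi(N)$ as $\abs{\det \vect{B}}$ via Lemma~\ref{lem:determinant} together with the spanning clause of the assumption. You in fact spell out the determinant bound $\det\mathcal L < 2^{d^2}$ and the double role of Assumption~\ref{assump:basis-generic} more explicitly than the paper does.
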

\begin{proof}
  The proof follows from the above analysis, and from Lemma~\ref{lem:quantum-algorithm-Zn} where the gate and space costs of the quantum circuit are analyzed.
\end{proof}

Note that if~$N$ is divisible by one or more of the first~$d$ primes in Theorem~\ref{thm:phi-N} above, then these prime powers may be factored out before calling the quantum algorithm.
The value of~$\varphi$ may then be efficiently computed classically with respect to these prime powers, and all partial results efficiently combined classically to yield~$\varphi(N)$.
The restriction imposed in Theorem~\ref{thm:phi-N} that~$N$ must be coprime to the first~$d$ primes does hence not imply a loss of generality.

\subsection{Factoring~$N$ by finding~$\varphi(N) = \# \mathbb Z_N^*$}
\label{sect:factor-via-phi}
Given~$\varphi(N)$, we may use a randomized version of Miller's algorithm~\cite{miller76} to factor~$N$ completely as explained in~\cite{ekera-completely}.
This provides yet another alternative for factoring via order finding that yields the complete factorization of~$N$ in $d + 4$ runs at the expense of making a stronger heuristic assumption.

A clear advantage of this alternative compared to that in App.~\ref{sect:factor-via-r} is that all elements that are exponentiated quantumly are small, bringing the quantum cost essentially on par with that of Regev's factoring algorithm~\cite{regev23}.

In fact, the quantum cost is slightly lower than than of Regev's algorithm:
Whereas Regev's algorithm exponentiates the squares of the first~$d$ primes, the above algorithm exponentiates the first~$d$ primes, so the numbers that are exponentiated are slightly smaller.

\subsection{Notes on generalizations}
The above theorems are for algorithms specific to~$\mathbb Z_N^*$ since there is a notion of small group elements in~$\mathbb Z_N^*$, and since~$\mathbb Z_N^*$ is extensively used in cryptography.
As previously stated, the algorithms may be generalized to other Abelian groups, but for the algorithms to have an advantage over other algorithms in the literature there must exist a notion of small elements in the group.

\end{document}